\documentclass[11pt,a4paper]{article}

\usepackage{preamble}
\usepackage{authblk}

\title{The Dynamical Lie Algebra of QAOA-MaxCut on the Complete Graph}

\author{Jonathan Allcock  \thanks{jonallcock@tencent.com}}
\author{Pei Yuan  \thanks{peiyuan@tencent.com}}
\author{Shengyu Zhang \thanks{shengyzhang@tencent.com}}
\affil{Tencent Quantum Laboratory}
\date{\today}

\begin{document}
\maketitle

\begin{abstract}
    We give an analytical expression for the dynamical Lie algebra corresponding to the QAOA-MaxCut problem on complete graphs, and show that the variance of the associated loss function scales linearly in the number of qubits. This solves an open problem from \cite{allcock2026} and confirms that such systems do not exhibit barren plateaus. The proof is based on projecting the dynamical Lie algebra generators onto  subspaces given by the Schur-Weyl duality between irreducible representations of the unitary and symmetric groups.
\end{abstract}

\section{Introduction}\label{introduction}

Dynamical Lie algebras (DLAs) have proven to be a valuable tool in the study of variational quantum algorithms, providing a quantitative way of characterizing the variance of the associated loss functions~\cite{ragone2023unified, fontana2024characterizing}, and hence determining the presence or absence of barren plateaus~\cite{mcclean2018barren}. These are flat regions in parameter space which inhibit the efficient training of such algorithms, and which correspond to the exponential decay of the loss function variance as a function of the number of qubits in the system.

The DLAs corresponding to the Quantum Approximate Optimization Algorithm~\cite{farhi2014quantum} for the graph MaxCut problem (QAOA-MaxCut) have been studied in recent years, but have been characterized for only two specific graphs -- cycles~\cite{d2024controllability,allcock2026} and paths~\cite{kazi2024analyzing}. Recently, DLAs for random graphs~\cite{mao2025qaoa} and for the multi-angle (ma) variant of QAOA-MaxCut~\cite{herrman2022multi}, in which each gate has its own angle parameter, have also been characterized~\cite{kokcu2024classification,kazi2024analyzing}. These results show that QAOA on random graphs and ma-QAOA exhibit barren plateaus. This motivates further studies on QAOA for graphs with symmetries, for which the DLAs may have finer structure, preventing barren plateaus from occurring.

Complete graphs have the most symmetries of all graphs, but obtaining a characterization of their DLAs remains an open problem. For complete graphs $K_n$ on $n$ vertices, partial results for the QAOA-MaxCut DLA $\mathfrak{g}_{K_n}$ were obtained in~\cite{allcock2026}, where it was shown that $\mathfrak{g}_{K_n}\cong \mathfrak{c}\oplus [\mathfrak{g}_{K_n},\mathfrak{g}_{K_n}]$, where the abelian center $\mathfrak{c}$ has dimension one or two\footnote{$\mathfrak{c}\cong \mathfrak{u}(1)$ if $n$ is even, and $\mathfrak{c}\cong \mathfrak{u}(1)\oplus \mathfrak{u}(1)$ if $n$ is odd.}, the semisimple component $[\mathfrak{g}_{K_n},\mathfrak{g}_{K_n}]$ has dimension $\Theta(n^3)$, and the decomposition of $[\mathfrak{g}_{K_n},\mathfrak{g}_{K_n}]$ into simple ideals was conjectured to have the following form.  




\begin{conjecture} [\cite{allcock2026}] 
$[\mathfrak g_{K_n},\mathfrak g_{K_n}] \cong \bigoplus_{j=1}^{n-1}\mathfrak{su}\left( \lfloor (j+1)/2\rfloor + 1 \right)$.
\end{conjecture}

This work contains two contributions. We first prove the following theorem, which is equivalent to the above conjecture.

\begin{theorem} \label{thm:dla-decomp}
     $[\mathfrak g_{K_n},\mathfrak g_{K_n}]\cong \bigoplus_{j\in\mathcal{J}_n} \mathfrak{su}(\left\lfloor j\right\rfloor +1)\oplus \mathfrak{su}(\left\lceil j \right\rceil)$,  where $\mathcal{J}_n = \{0,1,\ldots, n/2\}$ if $n$ is even, and $\mathcal{J}_n=\{1/2, 3/2, \ldots, n/2\}$ if $n$ is odd, and with the convention that $\mfsu(0) = \mfsu(1) = 0$.
\end{theorem}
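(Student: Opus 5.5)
The generators of $\mathfrak g_{K_n}$ are $iH_B$ with $H_B=\sum_k X_k$ and $iH_C$ with $H_C=\sum_{k<l}Z_kZ_l=\tfrac12\big((\sum_k Z_k)^2-n\big)$. Both operators are permutation invariant, so they lie in the image of $U(\mathfrak{gl}_2)$ acting diagonally on $(\mathbb C^2)^{\otimes n}$. Schur--Weyl duality gives
\[
(\mathbb C^2)^{\otimes n}\cong\bigoplus_{j\in\mathcal J_n} V_j\otimes S_j ,
\]
where $V_j$ is the spin-$j$ irrep of $\mathfrak{su}(2)$, of dimension $2j+1$, and $S_j$ is an $S_n$-irrep. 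Every element of $\mathfrak g_{K_n}$ therefore has the form $\bigoplus_j A_j\otimes I_{S_j}$. Writing $J_x=\tfrac12\sum_k X_k$ and $J_z=\tfrac12\sum_k Z_k$, the spin-$j$ blocks of the generators are $2J_x^{(j)}$ and $2(J_z^{(j)})^2$ up to additive constants. The plan is first to determine the Lie algebra $\mathfrak h_j$ generated by $iJ_x^{(j)}$ and $i(J_z^{(j)})^2$ inside $\mathfrak u(2j+1)$ for each fixed $j$. Then we show that $[\mathfrak g,\mathfrak g]$ is the full direct sum of the derived algebras $[\mathfrak h_j,\mathfrak h_j]$, with no diagonal identifications between different $j$.

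\textbf{Step 1 (single block).} Let $P=e^{i\pi J_x}$. Conjugation by $P$ sends $J_z\mapsto -J_z$ and fixes $J_x$, so it fixes both generators. Hence $\mathfrak h_j$ lies in the commutant of $P$. The $\pm$ eigenspaces of $P$ (equivalently of the $J_x$-parity $e^{i\pi(J_x+j)}$) have dimensions $\lfloor j\rfloor+1$ and $\lceil j\rceil$; one checks this with the $J_x$ eigenbasis $m=-j,\dots,j$ split by the parity of $j-m$. So $\mathfrak h_j\subseteq\mathfrak u(\lfloor j\rfloor+1)\oplus\mathfrak u(\lceil j\rceil)$. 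We aim to prove that $[\mathfrak h_j,\mathfrak h_j]=\mathfrak{su}(\lfloor j\rfloor+1)\oplus\mathfrak{su}(\lceil j\rceil)$.

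For this, work in the $J_x$ eigenbasis, where $J_x$ is diagonal with distinct eigenvalues. In this basis $J_z^2$ is pentadiagonal with nonzero entries only at offsets $0$ and $\pm2$, and the offset-$\pm2$ entries are nonzero. Restricted to each parity sector, $J_x$ is diagonal with distinct eigenvalue gaps and $J_z^2$ is a tridiagonal matrix with irreducible off-diagonal. A standard controllability argument applies here: iterated brackets with the diagonal $J_x$ separate each matrix-unit direction by its distinct frequency $m-m'$, and an irreducible tridiagonal connection graph then generates all of $\mathfrak{su}$. This gives $\mathfrak{su}(d_+)\oplus\mathfrak{su}(d_-)$ modulo possible coupling between the two sectors. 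The catch is that the offset-2 gaps are shared between the two sectors, since both have differences $\pm2$ when restricted to each sector. Frequency separation therefore isolates $E_{m,m+2}$ only jointly across sectors, and we need an extra argument. My first attempt would be to use $J_x^2$-type brackets or $J_z^2$ sector-diagonal entries, which differ between the sectors, to break the tie. Alternatively, one can show the two simple factors are non-isomorphic, or that the ideal generated is not a graph of an isomorphism, by checking a single trace/Casimir invariant. This step is the main obstacle.

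\textbf{Step 2 (gluing blocks).} Since $[\mathfrak g,\mathfrak g]$ is semisimple and projects onto each $[\mathfrak h_j,\mathfrak h_j]$, it is a subdirect sum of the simple factors. By Goursat's lemma it is the full direct sum unless two factors are identified via an isomorphism, which requires equal dimension. Equal-dimension coincidences do occur, namely $\lceil j\rceil=\lfloor j'\rfloor+1$ for $j'=j\pm\tfrac12$-type pairs, which cannot both occur for the same parity of $n$, and also across $j$ and $j+1$. To rule them out, I exhibit a commutator element whose spectra on the two candidate factors are not related by any isomorphism or anti-isomorphism; the obvious choice is the image of $[J_x,J_z^2]$ or of a degree-higher nested commutator. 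Its eigenvalues depend on $j$ through the Casimir $j(j+1)$ in a way that cannot match. Finally, summing over $j\in\mathcal J_n$ and using the convention $\mathfrak{su}(0)=\mathfrak{su}(1)=0$ yields the stated decomposition.
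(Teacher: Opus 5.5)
\textbf{Step 1 does not go through.} In the $J_x$ eigenbasis, the off-diagonal part $T$ of $J_z^2$ consists only of matrix units joining $m$ and $m+2$. All of them carry the same $J_x$-frequency $\pm 2$, because on a parity sector the eigenvalues of $J_x$ are equally spaced, so the gaps are not distinct. Indeed $[iJ_x,[iJ_x,iJ_z^2]]=-4iT$. Iterated brackets with $iJ_x$ therefore only rotate $iT$ into $[iJ_x,iT]$ and back, and isolate no matrix unit, even inside a single sector. The controllability criterion you invoke needs a diagonal element with non-degenerate frequencies on the edges of the connection graph, and $J_x$ is not one.

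Hence neither the surjectivity of $\mathfrak{h}_j$ onto $\mathfrak{su}$ of each sector nor the decoupling of the two sectors is established. You leave the latter open as ``the main obstacle,'' but together these are the whole content of the reverse inclusion, i.e.\ of the theorem.

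The missing idea is to split $J_z^2=D+T$ inside the algebra: $iT=-\tfrac14[iJ_x,[iJ_x,iJ_z^2]]$ and $iD=iJ_z^2-iT$. Here $D^{(j)}=\tfrac12\big(j(j+1)-(J_x^{(j)})^2\big)$ gives the edge $(m,m+2)$ the frequency $-2(m+1)$. Then $D+\gamma J_x$ with $\gamma\notin\tfrac12\mathbb{Z}$ has nonzero edge frequencies of pairwise distinct absolute value throughout the block. A Vandermonde inversion of $\mathrm{ad}^k_{i(D+\gamma J_x)}(iT)$ then yields every $iX_{m+2,m}$ and $iY_{m+2,m}$ in $[\mathfrak{h}_j,\mathfrak{h}_j]$. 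This also separates the two sectors, since they involve different $m$.

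The paper goes further. The element $[iT,[iT,iJ_x]]=-i\bigoplus_j(2j(j+1)-1)J_x^{(j)}+2iJ_x^3$ supplies a $j$-dependent diagonal term. The resulting $A_{\alpha,\beta}$ has frequencies $\lambda^{(j)}_m=-4j(j+1)+12(m+1)^2+2\beta+2\alpha(m+1)$, which separate all pairs $(j,m)$ simultaneously, so no gluing step is needed at all.

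\textbf{Step 2 is only asserted, and its bookkeeping is off.} For odd $n$, factors from different blocks never have equal dimension. The only coincidence is between the two sectors of one half-integer $j$, both of dimension $j+\tfrac12$. So ``the factors are non-isomorphic'' is false there, and $j(j+1)$-dependence is irrelevant.

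Worse, $e^{i\pi J_z}$ fixes $J_z^2$, sends $J_x\mapsto -J_x$, and swaps these two sectors. The restrictions of $[iJ_x,iJ_z^2]$ to them are therefore unitarily conjugate up to a sign. Being real antisymmetric in the $J_x$ basis, they have identical spectra, so your proposed invariant cannot exclude this identification. The $D$-based separation above does exclude it.

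For even $n$, the coincidences are the even sector of $j$ with the odd sector of $j+1$. These have the same set of $m$'s, and $|t^{(j+1)}_m|>|t^{(j)}_m|$ on every edge. So $\operatorname{tr}\big([iJ_x,iJ_z^2]^2\big)$ differs on the two sectors, whereas any isomorphism between copies of $\mathfrak{su}(d)$ preserves $\operatorname{tr}(XY)$, since it is proportional to the Killing form.

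With these repairs your Goursat-style route would be a legitimate alternative to the paper's single Vandermonde argument. As written, however, the core of the proof is missing.
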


We then analyze the variance of the loss function. Recall that in a variational quantum algorithm, an initial state $\rho$ is passed into a parameterized quantum circuit $U(\theta)$, at the end of which an observable $O$ is measured.  The loss function is then taken to be 
\eq{
\ell(\rho,O; \theta) = \operatorname{tr}\left( U(\theta)\rho U(\theta)^{\dagger}O \right).
}

Our second result is that, for $K_n$, the loss function variance scales linearly with $n$.




\begin{theorem} \label{thm:loss-var}For QAOA-MaxCut on~$K_{n}$, for circuits sufficiently deep that $U(\theta)$ acts as a unitary $2$-design,
\[\operatorname{Var}_{\theta}[\ell(\rho,O;\theta)] = \begin{cases}
\frac{4n^{2} + n - 3}{45(n - 1)}, & {\rm if~ } n\text{\rm~is~even;} \\
\frac{4n + 8}{45}, & \text{\rm if~} n\text{\rm~is~odd}.
\end{cases}\]
\end{theorem}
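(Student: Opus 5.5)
We would compute the variance with the Lie-algebraic variance formula of Ragone et al.~\cite{ragone2023unified}, using the decomposition of Theorem~\ref{thm:dla-decomp}. We take the standard QAOA-MaxCut setting: $\rho=|+\rangle\langle +|^{\otimes n}$ and $O=\sum_{i<j}Z_iZ_j$, the cost Hamiltonian. Normalization conventions should be fixed against the stated constants.

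If $O\in i\mathfrak g_{K_n}$ and $U(\theta)$ forms a $2$-design over the group $e^{\mathfrak g_{K_n}}$, then
\[
\operatorname{Var}_\theta[\ell]=\sum_k \frac{P_{\mathfrak g_k}(\rho)\,P_{\mathfrak g_k}(O)}{\dim \mathfrak g_k},
\]
where the sum runs over the simple ideals $\mathfrak g_k$ and $P_{\mathfrak g_k}(A)$ is the squared Hilbert--Schmidt norm of the orthogonal projection of $A$ onto $\mathfrak g_k$. The center $\mathfrak c$ contributes nothing, since central components are invariant under conjugation.

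\textbf{Step 1: realize the ideals concretely.} By Schur--Weyl duality, $(\mathbb C^2)^{\otimes n}=\bigoplus_j V_j\otimes S_j$, where $V_j$ is the spin-$j$ irrep and $S_j$ the corresponding $S_n$ irrep. Both generators are permutation invariant, so they act only on the $V_j$ factors:
\begin{itemize}
\item the mixer $\sum_i X_i$ acts as $2J_x$;
\item the cost $\sum_{i<j}Z_iZ_j$ acts as $2J_z^2-n/2$.
\end{itemize}
Both also commute with the parity operator $X^{\otimes n}$, which on $V_j$ is, up to a global sign, the $\pi$-rotation $R=e^{i\pi J_x}$. Each $V_j$ therefore splits into two parity eigenspaces, of dimensions $\lfloor j\rfloor+1$ and $\lceil j\rceil$. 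The $\mathfrak{su}$ summands in Theorem~\ref{thm:dla-decomp} are the traceless parts of the generated algebra acting on these blocks. We take this identification from the proof of Theorem~\ref{thm:dla-decomp}.

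\textbf{Step 2: locate $\rho$.} The state $|+\rangle^{\otimes n}$ lies in the symmetric sector $j=n/2$, with multiplicity space of dimension $1$. It is the top $J_x$ eigenvector, so it lies in a single parity block. That block is the one containing $m_x=n/2$, of dimension $d=\lfloor n/2\rfloor+1$. Only the ideal $\mathfrak{su}(d)$ on this block has nonzero projection of $\rho$. On it, $\rho$ restricts to a rank-one projector, so $P(\rho)=1-1/d$.

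The projection of $O$ onto this ideal is the traceless part of $O_+:=\Pi(2J_z^2-n/2)\Pi$, where $\Pi$ is the projector onto the block. Hence
\[
\operatorname{Var}=\frac{(1-1/d)\bigl(\operatorname{tr}O_+^2-(\operatorname{tr}O_+)^2/d\bigr)}{d^2-1}
=\frac{\operatorname{tr}O_+^2-(\operatorname{tr}O_+)^2/d}{d(d+1)}.
\]

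\textbf{Step 3: compute the traces.} Write $\Pi=\tfrac12(I\pm R)$. Then
\[
\operatorname{tr}(\Pi J_z^k)=\tfrac12\bigl(\operatorname{tr}J_z^k\pm \operatorname{tr}(RJ_z^k)\bigr).
\]
Since $R|m\rangle\propto|-m\rangle$ in the $J_z$ basis, $\operatorname{tr}(RJ_z^k)$ only receives a contribution from $m=0$. That term vanishes for $k\ge1$ and equals $\pm1$ for $k=0$ when $j$ is an integer; for half-integer $j$ it is absent. We then need $\operatorname{tr}\Pi$, $\operatorname{tr}\Pi J_z^2$ and $\operatorname{tr}\Pi J_z^4$. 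These follow from the power sums $\sum_m m^2$ and $\sum_m m^4$ over $m=-j,\dots,j$. Substituting into Step 2 with $j=n/2$ should give $\frac{4n^2+n-3}{45(n-1)}$ for even $n$ and $\frac{4n+8}{45}$ for odd $n$. The even/odd split arises from $d$ and from the $m=0$ correction.

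\textbf{Main obstacle.} We expect the delicate step to be the rigorous justification that the projection of $O$ onto the relevant simple ideal equals the traceless part of $O_+$. This requires the ideal to be all of $\mathfrak{su}(d)$ on that block, with no identification of this block with another block (for example, the other parity block of $V_{n/2}$ or some $V_j$ with $j<n/2$) under an isomorphism that would share the ideal across blocks. If the ideal acts diagonally on several isomorphic blocks, the projection and $\dim\mathfrak g_k$ must be computed over all of them, and the Hilbert--Schmidt norms weighted by the multiplicities $\dim S_j$. We would first check this against the block structure established for Theorem~\ref{thm:dla-decomp}. We would also fix the sign and normalization conventions by checking small $n$ (e.g.\ $n=2,3$) directly.
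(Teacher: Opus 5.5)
Your proposal is correct and is essentially the paper's proof. You reduce the Ragone et al.\ formula to the single ideal $\mathfrak{su}(W^{(n/2)}_{\mathrm{even}})$, use $\mathcal{P}(\rho)=1-1/d$, and compute the purity of the traceless compression of $O$ from $J_z^2$ on that block; your $\Pi=\frac{1}{2}(I\pm R)$ trace identity replaces the eigenbasis of Appendix~\ref{app:app2}. The obstacle you flag is already settled: Lemmas~\ref{lem:reverse-inclusion} and~\ref{lem:XYing} give $[\mathfrak g,\mathfrak g]$ as the literal block-by-block direct sum, and $\dim V_{n/2}=1$.

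Two small fixes are needed. First, the stated constants hold for the normalized observable $O=\frac{1}{\sqrt{|E|}}\sum_{s<t}Z_sZ_t$ with $|E|=n(n-1)/2$; your unnormalized $O$ multiplies the variance by $|E|$. Second, for odd $n$ we have $e^{i\pi J_x}=i^{n}X^{\otimes n}$, so $\frac{1}{2}(I\pm X^{\otimes n})$ is the projector, not $\frac{1}{2}(I\pm R)$; this is harmless because $\operatorname{tr}(RJ_z^k)=0$ for half-integer $j$.
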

Note that for both even and odd $n$,~$\operatorname{Var}_{\theta}[\ell(\rho,O; \theta)] \rightarrow \frac{4n}{45} = \Theta(n)$ as~$n \rightarrow \infty$. Since this does not exhibit exponential decay in $n$, barren plateaus are not present.

\section{Preliminaries}
For integer $n\ge 0$, let $S_n$ denote the symmetric group on $n$ elements, and let $\mfu(n)$ and $\mathfrak{su}(n)$ denote the real Lie algebras of skew-Hermitian and traceless skew-Hermitian $n\times n$ complex matrices, respectively, with the Lie bracket corresponding to the matrix commutator. For a finite-dimensional Hilbert space $V$, let $\mfu(V)$ and $\mathfrak{su}(V)$ denote the real Lie algebras of skew-Hermitian and traceless skew-Hermitian operators from $V$ to itself, respectively. It is clear that $\mfu(V)\cong\mfu(\dim(V))$ and $\mathfrak{su}(V)\cong\mathfrak{su}(\dim(V))$.  For a set of skew-Hermitian operators $\mathcal{A}=\{iA_1, \ldots, iA_k\}$ we denote by $\langle \mathcal{A} \rangle_{\text{Lie},\mathbb{R}}$ the dynamical Lie algebra generated by $\mathcal{A}$, i.e., the real vector space spanned by all nested commutators of the $iA_j$. For more background on Lie algebras and DLAs see e.g., \cite{allcock2026}.

For $\pi\in S_n$, let $P_\pi$ be the unitary representation of $\pi$ acting on $(\mathbb{C}^2)^{\otimes n}$, defined by its action on computational basis states $\ket{k_1}\otimes\ket{k_2}\otimes\ldots\otimes\ket{k_n}$:
\eq{
P_\pi(\ket{k_1}\otimes\ket{k_2}\otimes\ldots\otimes\ket{k_n}) &= \ket{k_{\pi^{-1}(1)}}\otimes \ket{k_{\pi^{-1}(2)}}\otimes \ldots \otimes \ket{k_{\pi^{-1}(n)}}.
}
We say that an operator $H$ is $S_n$-invariant if $[H,P_\pi]=0$ for all $\pi\in S_n$, and define
\eq{
\mfu(2^n)^{S_n}=\{ H \in \mfu(2^n): [H,P_\pi]=0, \quad \forall \pi\in S_n\}.
}
In what follows, capital letters e.g., $H, D, T, X$ will denote Hermitian operators, and $iH, iD, iT, iX$ etc.\ their skew-Hermitian counterparts.

Finally, for all integers $n\ge 0$, define the set of spins
\eq{
\mathcal{J}_n&=\begin{cases}
    \{n/2, n/2-1, \ldots, 0\}, & \text {if $n$ is  even};\\
    \{n/2, n/2-1, \ldots, 1/2\}, & \text {if $n$ is odd}.
\end{cases}
}

\section{Dynamical Lie algebra of $K_n$}

Let \[\X=\sum_{k=1}^n X_{k}, \qquad \Z=\sum_{k=1}^n Z_{k}, \qquad \text{and}\qquad \ZZ=\sum_{1\le s<t\le n}Z_{s} Z_{t},\] where $X_{k}$ and $Z_{k}$ are the Pauli $X$ and Pauli $Z$ operators acting on qubit $k$ of an $n$-qubit system, respectively. The DLA of QAOA-MaxCut on $K_n$ is defined to be 
\eq{
\mathfrak{g}_{K_n}=\langle \{i\X, i\ZZ\}\rangle_{\text{Lie},\mathbb{R}}.
}
As a subalgebra of $\mathfrak{u}(2^n)$, $\mathfrak{g}_{K_n}$ admits a decomposition 
\eql{ \label{eq:reductive}
\mathfrak{g}_{K_n}\cong \mathfrak{c}\oplus [\mathfrak g_{K_n},\mathfrak g_{K_n}],
}
where $\mathfrak{c}$ is an abelian center, and $[\mathfrak g_{K_n},\mathfrak g_{K_n}]=\spn_{\mathbb R}\{[A,B] : A,B\in\mathfrak{g}_{K_n} \}$ is a semisimple subalgebra. Note that \[\Z^2 = nI + 2\cdot \ZZ,\] and since the identity $I$ on $n$ qubits is annihilated by the commutator, we have the following fact.
\begin{fact} Let $\mathfrak{g}=\langle \{i\X, i\Z^2\}\rangle_{{\rm Lie},\mathbb{R}}$. Then \[[\mfg,\mfg]=[\mathfrak{g}_{K_n},\mathfrak{g}_{K_n}].\]
\end{fact}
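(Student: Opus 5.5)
We need to prove the Fact: $\mathfrak g=\langle\{i\X,i\Z^2\}\rangle$ has the same derived algebra as $\mathfrak g_{K_n}=\langle\{i\X,i\ZZ\}\rangle$. Write $\mathfrak{h}=\mathfrak g_{K_n}$ for brevity. The plan is to show that the two generating sets differ only by central elements (multiples of $iI$) and positive rescaling. From this, $\mathfrak g+\mathbb{R}\,iI=\mathfrak h+\mathbb{R}\,iI$, and since adding central elements does not change commutators, the derived algebras coincide.

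\textbf{Step 1 (generators).} From $\Z^2=nI+2\ZZ$ we get
\[
i\Z^2 = 2\,i\ZZ + n\, iI .
\]
Hence $i\Z^2\in \mathfrak h+\mathbb R iI$ and $i\ZZ\in\mathfrak g+\mathbb R iI$.

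\textbf{Step 2 (nested commutators).} Define $\tilde{\mathfrak h}=\mathfrak h+\mathbb R iI$. This is a Lie algebra, because $iI$ commutes with everything, so $[\tilde{\mathfrak h},\tilde{\mathfrak h}]=[\mathfrak h,\mathfrak h]\subseteq \mathfrak h$. It contains $i\X$ and $i\Z^2$, so it contains $\mathfrak g$. Symmetrically, $\tilde{\mathfrak g}=\mathfrak g+\mathbb R iI$ is a Lie algebra containing $i\X$ and $i\ZZ=\tfrac12(i\Z^2-n\,iI)$, so it contains $\mathfrak h$. Therefore $\tilde{\mathfrak g}=\tilde{\mathfrak h}$.

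\textbf{Step 3 (derived algebras).} For any Lie algebra $\mathfrak k$ we have $[\mathfrak k+\mathbb R iI,\ \mathfrak k+\mathbb R iI]=[\mathfrak k,\mathfrak k]$ by bilinearity and centrality of $iI$. Applying this to both sides gives
\[
[\mathfrak g,\mathfrak g]=[\tilde{\mathfrak g},\tilde{\mathfrak g}]=[\tilde{\mathfrak h},\tilde{\mathfrak h}]=[\mathfrak h,\mathfrak h]=[\mathfrak g_{K_n},\mathfrak g_{K_n}].
\]

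The only point needing any care is Step 2, namely that the span of a Lie algebra with a central element is again closed under the bracket. This follows immediately from centrality, so there is no real obstacle.
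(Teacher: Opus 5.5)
Your proof is correct and takes the same approach as the paper, which simply notes $\Z^2 = nI + 2\ZZ$ and that $I$ is annihilated by the commutator. Your Steps 2--3 just make explicit why adjoining the central element $iI$ to either generating set gives the same Lie algebra and leaves the derived algebra unchanged.
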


In this section we prove the following theorem which, by the above fact, implies Theorem~\ref{thm:dla-decomp}.


\begin{theorem}\label{thm:gg} Let $\mfg =\langle \{i\X, i\Z^2\}\rangle_{{\rm Lie}, \mathbb{R}}$. Then \[[\mfg,\mfg] \cong \bigoplus_{j\in\mathcal{J}_n} \mathfrak{su}(\left\lfloor j\right\rfloor +1)\oplus \mathfrak{su}(\left\lceil j \right\rceil).\]
\end{theorem}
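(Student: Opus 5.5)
We propose to use Schur--Weyl duality to reduce the problem to individual spin sectors. Both $\X=2S_x$ and $\Z^2=4S_z^2$ are $S_n$-invariant, where $S_\alpha=\frac12\sum_k\sigma^\alpha_k$ are the collective spin operators. Hence $\mfg\subseteq\mfu(2^n)^{S_n}$. By Schur--Weyl duality,
\[
(\mathbb C^2)^{\otimes n}\cong\bigoplus_{j\in\mathcal J_n} V_j\otimes W_j ,
\]
where $V_j$ is the spin-$j$ irrep of $\mathfrak{su}(2)$, of dimension $2j+1$, and $W_j$ is an $S_n$ irrep. Every element of $\mfg$ therefore has the form $\bigoplus_j A_j\otimes I_{W_j}$, so $\mfg$ embeds in $\bigoplus_j \mfu(V_j)$. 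Its image is generated by the projections $iX^{(j)}=2iJ_x^{(j)}$ and $iZ^{(j)}=4i(J_z^{(j)})^2$.

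Within each $V_j$, we exploit the parity symmetry $R=e^{i\pi J_x}$, a rotation by $\pi$ about the $x$-axis. It commutes with $J_x$ and maps $J_z\mapsto -J_z$, so it commutes with $J_z^2$. Its eigenspaces $V_j^{\pm}$ have dimensions $\lfloor j\rfloor+1$ and $\lceil j\rceil$. To see this, note that in the $J_z$ basis $R$ maps $|m\rangle\mapsto$ (phase)$\,|{-m}\rangle$, so the eigenspaces are spanned by $|m\rangle\pm|{-m}\rangle$ for $m\ge 0$, together with the $m=0$ vector when $j$ is an integer. Consequently the image of $\mfg$ in sector $j$ lies in $\mfu(V_j^+)\oplus\mfu(V_j^-)$.

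We then prove fullness within a sector. The claim is that the Lie algebra generated by $iJ_x$ and $iJ_z^2$ on $V_j$ contains $\mathfrak{su}(V_j^+)\oplus\mathfrak{su}(V_j^-)$. We would work in the $J_x$-eigenbasis, in which $J_x$ is diagonal with nondegenerate spectrum $\{-j,\dots,j\}$. Restricted to each $V_j^\pm$, $J_x$ remains diagonal with distinct eigenvalues; these are the $m$ of fixed parity $(-1)^{m}$ up to convention. In this basis $J_z^2$ is tridiagonal-with-gaps: it connects $m$ to $m\pm2$ only, with nonzero coefficients. The standard criterion then applies. For a diagonal $H_0$ with nondegenerate spectrum and nondegenerate transition gaps, together with a Hermitian $H_1$ whose off-diagonal graph is connected, the generated algebra contains $\mathfrak{su}$. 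The gaps here are all equal to $2$, which is degenerate, so we would instead argue directly. Repeated commutators $\mathrm{ad}_{J_x}^k(J_z^2)$ separate the $\pm2$ raising and lowering parts, giving $J_\pm^2$-type operators restricted to each parity block. A chain of nearest-neighbor couplings with generically distinct coefficients $\sqrt{(j\mp m)(j\mp m-1)(j\pm m+1)(j\pm m+2)}$, together with diagonal $J_x$ and $J_x^2$-like terms, generates $\mathfrak{su}$ of the block. This is the usual tridiagonal (path-graph) controllability argument: the coefficients are pairwise distinct up to the reflection symmetry, which $R$ has already quotiented out.

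The main obstacle is gluing the sectors. We must show that the semisimple part is the full direct sum, and not a diagonal subalgebra linking isomorphic blocks across different $j$ or across $\pm$ within one $j$. By Goursat / the standard lemma, it suffices to show that no two simple summands of equal dimension are related by an isomorphism compatible with the generators. We would compare spectra: isomorphic embedded copies would force the projected generators to be unitarily equivalent up to sign or transpose, modulo scalars. However, $J_x$ restricted to $V_j^\pm$ has spectrum $\{j, j-2,\dots\}$ or $\{j-1,j-3,\dots\}$. Different $(j,\pm)$ blocks of equal dimension therefore have shifted spectra that are not related by $\lambda\mapsto\pm\lambda+c$ jointly with the spectrum of $J_z^2$; for instance, traces of $(J_z^2)$-moments differ. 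Once no such coupling exists, $[\mfg,\mfg]=\bigoplus_{j}\mathfrak{su}(\lfloor j\rfloor+1)\oplus\mathfrak{su}(\lceil j\rceil)$ follows. The trivial summands $\mathfrak{su}(1)=\mathfrak{su}(0)=0$ account for the small cases.
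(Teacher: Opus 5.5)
Your route (Schur--Weyl reduction, parity splitting, fullness on each block, then a Goursat-type exclusion of diagonal embeddings) differs from the paper's and can be made to work. As written, though, the step you flag as the main obstacle fails, because the invariants you propose to compare coincide. For odd $n$ the only nonzero summands of equal dimension are $\mathfrak{su}(W^{(j)}_{\rm even})$ and $\mathfrak{su}(W^{(j)}_{\rm odd})$ for the same half-integer $j$, with $d=j+\tfrac12\ge 2$. On these blocks $J_x$ has eigenvalues $\{j,j-2,\dots,1-j\}$ and $\{j-1,j-3,\dots,-j\}$. These sets are related both by $\lambda\mapsto\lambda-1$ and by $\lambda\mapsto-\lambda$, so the traceless parts are isospectral. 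Moreover $J_z^2$ has the identical spectrum $\{\tfrac14,\tfrac94,\dots,j^2\}$ on both blocks: each $m>0$ contributes $|j,m\rangle_z+P|j,m\rangle_z$ to one block and $|j,m\rangle_z-P|j,m\rangle_z$ to the other, as in Appendix~B. So every moment of either generator agrees, and your argument does not exclude a diagonal $\mathfrak{su}(d)$. Your $J_x$ claim is also false for even $n$: there $W^{(d-1)}_{\rm even}$ and $W^{(d)}_{\rm odd}$ have identical $J_x$ spectra, and that case is saved only by the $J_z^2$ spectra $\{0,1,\dots,(d-1)^2\}$ versus $\{1,4,\dots,d^2\}$. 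What is needed is a joint invariant. A compatible isomorphism must send the traceless part of each generator on one block to the corresponding one on the other, and it preserves the Killing form. Hence it preserves $\operatorname{tr}(J_x^0J_z^2)$, where $J_x^0$ is the traceless part of $J_x$ on the block. The diagonal entries $k$ of $J_x^0$ run over $\lfloor j\rfloor,\lfloor j\rfloor-2,\dots,-\lfloor j\rfloor$ on both blocks, and one finds $\operatorname{tr}(J_x^0J_z^2)=-\tfrac12\sum_k k^2$ on $W^{(j)}_{\rm even}$ but $+\tfrac12\sum_k k^2$ on $W^{(j)}_{\rm odd}$, so no such isomorphism exists.

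The per-block fullness step is also misjustified. $R=e^{i\pi J_x}$ is diagonal in the $J_x$ basis, so it does not quotient out the reflection $m\mapsto -m$. For integer $j$ each parity block is closed under that reflection and $t^{(j)}_{-m-2}=t^{(j)}_m$, so the couplings come in equal pairs and all $J_x$ gaps equal $2$. Already on $W^{(2)}_{\rm even}$ (basis $m=2,0,-2$) the restrictions of $J_x$ and $T$ form a spin-$1$ pair, generating only a copy of $\mathfrak{su}(2)$ inside $\mathfrak{su}(3)$. The ingredient that actually works is the diagonal element $iD=iJ_z^2+\tfrac14[iJ_x,[iJ_x,iJ_z^2]]\in\mathfrak g$; this is your ``$J_x^2$-like term'', and you need to establish that it lies in $\mathfrak g$. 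The combination $\alpha D+\beta J_x$ has transition frequencies $2\beta-2\alpha(m+1)$, which are nonzero and distinct in absolute value for generic $\alpha,\beta$. The ad-power/Vandermonde inversion against $iT$ then yields every $iX^{(j)}_{m+2,m}$ and $iY^{(j)}_{m+2,m}$ in the projection of $\mathfrak g$ onto $W^{(j)}$. This also disposes of the even/odd pairing within a fixed $j$ without any invariant. The paper goes one step further: adding the cubic element $[iT,[iT,iJ_x]]$ makes the frequencies $\lambda^{(j)}_m$ depend on $j$ as well and be distinct across all $(j,m)$. Each elementary generator then lies in $[\mathfrak g,\mathfrak g]$ itself, and no gluing analysis is needed.
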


Our proof of this theorem is based on the fact that $\X$ and $\Z$ are $S_n$-invariant Hermitian operators, and thus Schur-Weyl duality can be applied to analyzing the decomposition of the space of $n$ qubits into tensor products of irreducible representations of  $\mathfrak{su}(2)$ and $S_n$. 
\begin{theorem}[Schur-Weyl duality~\cite{fulton2013representation}] \label{thm:schur-weyl}
\eq{
(\mathbb{C}^2)^{\otimes n}\cong \bigoplus_{j\in \mathcal{J}_n}W_j \otimes V_j,
}
where $W_j$ and $V_j$ are irreducible representations of $\mathfrak{su}(2)$ and $S_n$, respectively.
\end{theorem}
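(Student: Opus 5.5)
The statement is the classical Schur--Weyl duality, here specialised to qubits. I would prove it by the double-commutant method, with the collective spin operators playing the role of $\mathfrak{su}(2)$.

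\textbf{Step 1: the two actions commute.} Let $\mathfrak{su}(2)$ act on $(\mathbb{C}^2)^{\otimes n}$ through the collective operators $\tfrac12\sum_k X_k$, $\tfrac12\sum_k Y_k$, $\tfrac12\sum_k Z_k$. At the group level this is $U\mapsto U^{\otimes n}$. Each $P_\pi$ only permutes tensor factors, so it commutes with every $U^{\otimes n}$.

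\textbf{Step 2: decompose under $\mathfrak{su}(2)$.} Complete reducibility, with Clebsch--Gordan applied inductively in $n$, gives
\[
(\mathbb{C}^2)^{\otimes n}\cong\bigoplus_j W_j\otimes M_j,
\qquad M_j:=\mathrm{Hom}_{\mathfrak{su}(2)}\bigl(W_j,(\mathbb{C}^2)^{\otimes n}\bigr).
\]
Because the $P_\pi$ commute with the $\mathfrak{su}(2)$ action, $S_n$ acts on each multiplicity space $M_j$, and the decomposition holds as $\mathfrak{su}(2)\times S_n$-modules.

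To find which $j$ occur, I would identify $M_j$ with the highest-weight vectors of weight $j$. These are the kernel of $S^+=\sum_k\sigma^+_k$ on the weight space where the total $Z$-eigenvalue equals $2j$. Since $S^+$ maps weight space $m$ onto weight space $m+1$ whenever $m<0$, we get
\[
\dim M_j=\binom{n}{n/2-j}-\binom{n}{n/2-j-1}.
\]
This is positive exactly when $n/2-j$ is a nonnegative integer with $j\ge0$, which is precisely $j\in\mathcal{J}_n$.

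\textbf{Step 3: double commutant.} It remains to show that each $M_j$ is an irreducible $S_n$-module; I would also show the $M_j$ are pairwise non-isomorphic. Let $\mathcal{A}=\mathrm{span}\{U^{\otimes n}\}$ and $\mathcal{B}=\mathrm{span}\{P_\pi\}$. I would prove $\mathcal{A}=\mathcal{B}'$, the commutant of $\mathcal{B}$.
\begin{itemize}
\item The commutant $\mathcal{B}'$ is the space of $S_n$-symmetric tensors in $\mathrm{End}(\mathbb{C}^2)^{\otimes n}$.
\item By polarization, $\mathrm{span}\{A^{\otimes n}:A\in\mathrm{End}(\mathbb{C}^2)\}$ equals this symmetric subspace.
\item The map $A\mapsto A^{\otimes n}$ is polynomial, and $U(2)$ is Zariski-dense in $GL(2)$, which is in turn dense in $\mathrm{End}(\mathbb{C}^2)$. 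So the same span is already achieved by unitaries, giving $\mathcal{A}=\mathcal{B}'$.
\end{itemize}
Since $\mathcal{B}$ is a semisimple $*$-algebra, the double commutant theorem then gives $\mathcal{B}=\mathcal{A}'$. On the other hand, Schur's lemma applied to the $\mathfrak{su}(2)$ decomposition gives $\mathcal{A}'=\bigoplus_j I_{W_j}\otimes\mathrm{End}(M_j)$. Hence $\mathcal{B}$ acts on each $M_j$ as the full matrix algebra $\mathrm{End}(M_j)$, and acts independently on different blocks. Therefore each $M_j$ is irreducible, and the $M_j$ for different $j$ are pairwise inequivalent. Setting $V_j:=M_j$ gives the stated decomposition.

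\textbf{Main obstacle.} I expect the hard step to be the equality $\mathrm{span}\{U^{\otimes n}\}=(\text{symmetric operators})$, i.e.\ the polarization-plus-density argument. As an alternative, one can avoid the full double commutant: exhibit $V_j$ directly as a Specht module for the two-row partition $(n/2+j,\,n/2-j)$, and match dimensions using the hook-length formula. That would confirm irreducibility of each $M_j$ but still relies on the character theory of $S_n$.
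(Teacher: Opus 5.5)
The paper gives no proof of this statement: it quotes classical Schur--Weyl duality with a citation to Fulton--Harris and uses only its consequences. Your double-commutant argument is the standard textbook proof and it is correct, so the comparison is really between citing and proving.

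Your route has the advantage of justifying several facts that the paper later asserts without argument:
\begin{itemize}
\item exactly the spins $j\in\mathcal{J}_n$ occur;
\item $\dim V_j={n\choose n/2-j}-{n\choose n/2-j-1}$, and in particular $\dim V_{n/2}=1$, which is used in the proof of Lemma~\ref{lem:rho-o-purity};
\item via $\mathcal{B}'=\mathcal{A}=\mathcal{A}''=\bigoplus_j \mathrm{End}(W_j)\otimes I_{V_j}$, every $S_n$-invariant operator block-diagonalizes as $\bigoplus_j H_j\otimes I_{V_j}$.
\end{itemize}
Notice also how the paper actually uses the theorem: the block form of $\sum_k X_k$ and $\sum_k Z_k$ in Eq.~\eqref{eq:XZ}, Lemma~\ref{lem:la-iso}, and $\dim V_{n/2}=1$. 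For all of these your Step~2 alone (the isotypic decomposition of the collective $\mathfrak{su}(2)$ action) already suffices. The irreducibility of the $V_j$ established in Step~3 is needed only for the statement as written, not downstream.

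Two small points need tidying.

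In Step~2 the surjectivity claim has the wrong range. $S^+$ maps weight space $m$ onto weight space $m+1$ for $m\ge -1/2$, whereas for $m<0$ it is injective, not surjective. For example, when $n=4$ the weight spaces $m=-2$ and $m=-1$ have dimensions $1$ and $4$. The case you actually need, $m=j\ge 0$, is true, so your dimension formula stands. It also follows directly by counting weight multiplicities in the decomposition into spin irreducibles.

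In Step~3 you should say explicitly why $\mathcal{A}'$ equals the commutant of the collective $\mathfrak{su}(2)$ action. The reason is that $U(2)=U(1)\cdot SU(2)$, the $U(1)$ factor acts by scalars, and $SU(2)$ is connected. This identification is what licenses applying Schur's lemma to the $\mathfrak{su}(2)$ decomposition to get $\mathcal{A}'=\bigoplus_j I_{W_j}\otimes\mathrm{End}(M_j)$.
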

The irreducible representations $W_j$ and $V_j$ appearing in Theorem~\ref{thm:schur-weyl} are fully classified, and have dimensions $\dim(W_j)=2j+1$ and $\dim(V_j)={n \choose n/2-j}- {n \choose n/2 -j -1}$.

Schur-Weyl duality implies that there exists a basis transformation such that all $iH\in \mfu(2^n)^{S_n}$ simultaneously block diagonalize as $iH=i\bigoplus_j H_j \otimes I_{V_j}$, where $iH_j\in\mathfrak{u}(W_j)$.  Since $\X,\Z\in \mfu(2^n)^{S_n}$ and, furthermore, since $\X$ and $\Z$ generate a representation of $\mathfrak{su}(2)$ acting on $(\mathbb{C}^2)^{\otimes n}$, 
the $iH_j$ are irreducible representations of $\mathfrak{su}(2)$. By  the standard Clebsch-Gordan construction~\cite{sakurai2020modern}, one can express these as
\eql{
i\X &= i\bigoplus_{j\in\mathcal{J}_n} 2J_x^{(j)}\otimes I_{V_j}, \qquad i\Z = i\bigoplus_{j\in\mathcal{J}_n} 2J_z^{(j)}\otimes I_{V_j}, \label{eq:XZ}
}
where, for each $j\in \mathcal{J}_n$,  $J_x^{(j)}$ and $J_z^{(j)}$ can be defined with respect to a basis $\{\ket{j,m}_x\}_{m\in \mathcal{M}_j}$ by
\eql{
J_x^{(j)}\ket{j,m}_x &= m\ket{j,m}_x, \label{eq:jx}\\
J_z^{(j)}\ket{j,m}_x &= -\frac{i}{2}\left(\sqrt{j(j+1)-m(m+1)}\ket{j,m+1}_x - \sqrt{j(j+1)-m(m-1)}\ket{j,m-1}_x\right), \label{eq:jz}
}
where the index set
\eq{
\mathcal{M}_j = \{j, j-1, \ldots, -j\}.
}

The basis states $\ket{j,m}_x$ can be explicitly constructed from tensor products of the single-qubit computational $X$-basis states, $\ket{+}$ and $\ket{-}$, via the application of Young symmetrizers \cite{fulton2013representation}. In particular, the maximal spin sector $j=n/2$ corresponds to the fully symmetric subspace, where the highest-weight state is simply the product state 
\eql{ \label{eq:maxjstate}
\ket{n/2, n/2}_x = \ket{+}^{\otimes n}.
}
Indeed, $J_x^{(n/2)} \ket{+}^{\otimes n} = \frac{\X}{2} \ket{+}^{\otimes n} = \frac{n}{2} \ket{+}^{\otimes n}$, therefore $\ket{+}^{\otimes n}$ is the eigenvector of $J_x^{(n/2)}$ with eigenvalue $m=n/2$.

From Eq.~\eqref{eq:XZ}, it follows that \[\langle \{i\X, i\Z^2\} \rangle_{\text{Lie},\mathbb{R}} \cong\langle \{i\bigoplus_j J_x^{(j)}\otimes I_{V_j}, i\bigoplus_j(J_z^{(j)})^2\otimes I_{V_j}\}\rangle_{\text{Lie},\mathbb{R}}\] and, as the next lemma shows, we can ignore the multiplicity (i.e., the $I_{V_j}$ term) of each $J_{x,z}^{(j)}$ block (when the Lie algebra structure is concerned).  Define \[J_x=\bigoplus_j J_x^{(j)} \qquad \text{and} \qquad J_z^2 = \bigoplus_j\left(J_z^{(j)}\right)^2.\]  

\begin{lemma}\label{lem:la-iso}$\langle \{i\X, i\Z^2 \}\rangle_{{\rm Lie},\mathbb{R}} \cong \langle \{iJ_x, iJ_z^2 \}\rangle_{{\rm Lie},\mathbb{R}}$. 
\end{lemma}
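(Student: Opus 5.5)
We will exhibit an explicit Lie algebra isomorphism, built from the map that strips off the multiplicity factors. First, by Eq.~\eqref{eq:XZ} there is a single unitary basis change $U$ (the Schur--Weyl transform of Theorem~\ref{thm:schur-weyl}) under which
\[
i\X \mapsto i\bigoplus_j 2J_x^{(j)}\otimes I_{V_j}, \qquad i\Z^2 \mapsto i\bigoplus_j 4(J_z^{(j)})^2\otimes I_{V_j}.
\]
Conjugation by a fixed unitary is a Lie algebra automorphism of $\mfu(2^n)$, so it maps the DLA generated by $\{i\X,i\Z^2\}$ isomorphically onto the DLA generated by the images. Positive real rescalings of the generators (by $2$ and $4$) do not change the real span of nested commutators. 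Hence the left-hand side is isomorphic to $\mathfrak{h}:=\langle\{i\bigoplus_j J_x^{(j)}\otimes I_{V_j},\, i\bigoplus_j (J_z^{(j)})^2\otimes I_{V_j}\}\rangle_{\text{Lie},\mathbb{R}}$.

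Next, define the map
\[
\Phi:\ \mathcal{B}:=\Bigl\{\bigoplus_j A_j\otimes I_{V_j}\Bigr\}\to \bigoplus_j \mathrm{End}(W_j), \qquad \Phi\Bigl(\bigoplus_j A_j\otimes I_{V_j}\Bigr)=\bigoplus_j A_j.
\]
We check three properties of $\Phi$:
\begin{itemize}
\item It is real-linear.
\item It is a Lie homomorphism, since
\[
\Bigl[\bigoplus_j A_j\otimes I_{V_j},\ \bigoplus_j B_j\otimes I_{V_j}\Bigr]=\bigoplus_j [A_j,B_j]\otimes I_{V_j},
\]
using $(A\otimes I)(B\otimes I)=AB\otimes I$ blockwise. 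In particular, $\mathcal{B}$ is closed under commutators.
\item It is injective, because $A_j\otimes I_{V_j}=0$ implies $A_j=0$ whenever $\dim V_j\ge 1$.
\end{itemize}
Since $\mathfrak{h}\subseteq\mathcal{B}$ (its generators lie in $\mathcal{B}$, which is closed under brackets), $\Phi$ restricts to an injective Lie homomorphism on $\mathfrak{h}$. Its image is the real span of nested commutators of $\Phi$ applied to the generators, namely $iJ_x$ and $iJ_z^2$. Therefore $\Phi(\mathfrak{h})=\langle\{iJ_x,iJ_z^2\}\rangle_{\text{Lie},\mathbb{R}}$, which gives the isomorphism.

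The only point needing care is injectivity, which requires every $V_j$ with $j\in\mathcal{J}_n$ to be nonzero. This holds because
\[
\dim V_j=\binom{n}{n/2-j}-\binom{n}{n/2-j-1}\ge 1 \qquad \text{for } 0\le n/2-j\le n/2,
\]
by unimodality of binomial coefficients. The identification of $\mathfrak{h}$ with the DLA generated by $\{i\X,i\Z^2\}$ relies on the simultaneous block-diagonalization in Eq.~\eqref{eq:XZ}, which we take as given from Schur--Weyl duality.
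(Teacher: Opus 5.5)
Your proof is correct and follows the same route as the paper: define the map that strips the multiplicity factors $I_{V_j}$ from block operators $\bigoplus_j A_j\otimes I_{V_j}$, check that it preserves commutators and is injective, and apply it to the Schur--Weyl forms of the generators. You also make explicit three points the paper leaves implicit: the unitary basis change, the harmless rescalings by $2$ and $4$, and the fact that $\dim V_j\ge 1$, which is what injectivity needs.
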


\begin{proof} Let $\mathbb{C}I_{V_j} = \{c I_{V_j} : c \in \mathbb{C}\}$ be the set of linear operators proportional to the identity acting on the space $V_j$, and define the map $\phi:\bigoplus_j\left(\mfu(W^{(j)})\otimes \mathbb{C}I_{V_j}\right)\rightarrow \bigoplus_j\mfu(W^{(j)})$ by $\phi\left(i \bigoplus_j  H_j\otimes I_{V_j}\right)= i \bigoplus_j H_j$. 
It is easy to verify that 
\begin{enumerate}
    \item $\phi\left[ i\bigoplus_j H_j \otimes I_{V_j}, i\bigoplus_j H'_j\otimes I_{V_j}\right] = \left[ \phi\left(i\bigoplus_j H_j\otimes I_{V_j}\right), \phi\left(i\bigoplus_j H'_j\otimes I_{V_j}\right)\right]$,
    \item $\phi\left(i\bigoplus_j H_j \otimes I_{V_j}\right)=0 \Rightarrow i\bigoplus_j H_j\otimes I_{V_j}=0$, 
\end{enumerate}  
and thus $\phi$ is a Lie algebra homomorphism with a trivial kernel, i.e., a Lie algebra isomorphism. The lemma follows from applying $\pi$ to $i\X$ and $i\Z^2$ from Eq.~\eqref{eq:XZ}. 
\end{proof}

Lemma~\ref{lem:la-iso} allows us to identify $\mfg \cong \left\langle \{iJ_x, iJ_z^2\}\right\rangle_{\text{Lie},\mathbb{R}}$, and motivates us to analyze the operators $J_x^{(j)}$ and $\left(J_z^{(j)}\right)^2$.  We will find the following facts, which can be verified by direct computation from their definitions, useful.

\begin{fact}\label{fact:jx-jz1}
    The Hermitian operator $\left(J_z^{(j)}\right)^2$ decomposes into a diagonal and bi-diagonal part $\left(J_z^{(j)}\right)^2 =  D^{(j)} + T^{(j)}$, where 
    \eq{D^{(j)}&= \frac{1}{2}\sum_{m\in \mathcal{M}_j}\left(j(j+1)-m^2\right)\ket{j,m}_x\bra{j,m}_x,\\
    T^{(j)}&= \sum_{m=-j}^{j-2}t^{(j)}_{m}\big(\ket{j,m+2}_x\bra{j,m}_x + \ket{j,m}_x\bra{j,m+2}_x\big),
    }
    with
    \eql{
    t^{(j)}_{m}=-\frac{1}{4}\sqrt{j(j+1)-m(m+ 1)}\sqrt{j(j+1)-(m+ 1)(m+ 2)}. \label{eq:tjm}}
\end{fact}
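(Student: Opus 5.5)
Since both $J_z^{(j)}$ and its square are given explicitly in the $x$-basis by Eq.~\eqref{eq:jz}, the plan is to square the matrix directly and read off the diagonal and the off-diagonal entries. Write $J_z^{(j)} = -\frac{i}{2}(A - A^\dagger)$, where $A$ is the raising operator
\[
A\ket{j,m}_x = a_m\ket{j,m+1}_x, \qquad a_m=\sqrt{j(j+1)-m(m+1)},
\]
with $a_j=0$. The adjoint then acts as $A^\dagger\ket{j,m}_x = a_{m-1}\ket{j,m-1}_x$, and $a_{m-1}=\sqrt{j(j+1)-m(m-1)}$ is exactly the coefficient appearing in Eq.~\eqref{eq:jz}. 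Squaring gives
\[
\left(J_z^{(j)}\right)^2 = -\frac14\left(A^2 + (A^\dagger)^2 - AA^\dagger - A^\dagger A\right).
\]
The terms $AA^\dagger$ and $A^\dagger A$ preserve $m$ and so give the diagonal part $D^{(j)}$. The terms $A^2$ and $(A^\dagger)^2$ shift $m$ by $\pm 2$ and so give the bi-diagonal part $T^{(j)}$.

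\textbf{Diagonal part.} Acting on a basis vector gives $A^\dagger A\ket{j,m}_x = a_m^2\ket{j,m}_x$ and $AA^\dagger\ket{j,m}_x = a_{m-1}^2\ket{j,m}_x$. Their sum is
\[
a_m^2 + a_{m-1}^2 = 2j(j+1) - m(m+1) - m(m-1) = 2\left(j(j+1)-m^2\right),
\]
so the diagonal coefficient is $\frac14\cdot 2(j(j+1)-m^2) = \frac12(j(j+1)-m^2)$, matching $D^{(j)}$. At the boundary $m=\pm j$ one of the two terms vanishes ($a_j=0$ or $a_{-j-1}=0$), and the formula still holds uniformly, so no separate case is needed.

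\textbf{Off-diagonal part.} We have $A^2\ket{j,m}_x = a_m a_{m+1}\ket{j,m+2}_x$, so the $\ket{j,m+2}_x\bra{j,m}_x$ coefficient of $(J_z^{(j)})^2$ is $-\frac14 a_m a_{m+1}$. This equals $t_m^{(j)}$ from Eq.~\eqref{eq:tjm}, since $a_{m+1}=\sqrt{j(j+1)-(m+1)(m+2)}$. The term $(A^\dagger)^2$ is the adjoint of $A^2$, so it contributes the same real coefficient to $\ket{j,m}_x\bra{j,m+2}_x$. The range of $m$ is $-j\le m\le j-2$, since $a_m a_{m+1}=0$ for $m\in\{j-1,j\}$.

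Hermiticity of the result is automatic, because $J_z^{(j)}$ is Hermitian. The only real obstacle is bookkeeping: keeping the sign from $(-i/2)^2=-\tfrac14$ and the index shifts in $a_{m-1}$ and $a_{m+1}$ straight. No earlier result is needed beyond the definitions \eqref{eq:jx}--\eqref{eq:jz}.
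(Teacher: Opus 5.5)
Your proposal is correct and is essentially the paper's approach: the paper offers no argument beyond saying the fact follows ``by direct computation from their definitions,'' and your expansion of $J_z^{(j)}=-\frac{i}{2}(A-A^\dagger)$ is exactly that computation carried out from Eq.~\eqref{eq:jz}. Your treatment of the diagonal terms ($a_m^2+a_{m-1}^2=2(j(j+1)-m^2)$), the off-diagonal coefficients $-\frac14 a_m a_{m+1}=t^{(j)}_m$, and the boundary cases $m=\pm j$ and $m\in\{j-1,j\}$ is all accurate.
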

It will be important to note that $t_m^{(j)}$  is non-zero for all $j\in \mathcal{J}_n$ and all $m\in \{j-2,\ldots, -j\}$.




\begin{definition} \label{def:Wspaces} 
Define the following spaces
\begin{enumerate}
    \item $W^{(j)} = \spn\{ \ket{m,j}_x : m\in \mathcal{M}_j\}$
    \item $W^{(j)}_{\text{even}} =\spn\{\ket{j,m}_x : (j-m) \in 2\mathbb{Z}\}$
    \item $W_\text{odd}^{(j)}=\spn\{\ket{j,m}_x : (j-m) \in 2\mathbb{Z}+1\}$
\end{enumerate} 
and define
\begin{align*}
    \mathcal{M}_{j,\text{even}} & = \{m\in \mathcal{M}_j : (j-m)\in 2\mathbb{Z}\},\\
    \mathcal{M}_{j,\text{odd}} & = \{m\in \mathcal{M}_j : (j-m)\in 2\mathbb{Z}+1\}.
\end{align*}
\end{definition}
\noindent It follows from these definitions that, for any $j\in \mathcal{J}_n$, 
\eql{
\dim\left(W^{(j)}_\text{even}\right)&= \lfloor j\rfloor + 1, \qquad \dim\left(W^{(j)}_\text{odd}\right)= \lceil j \rceil. \label{eq:dimWevenodd}
}

\begin{lemma} $W^{(j)}_{{\rm even}}$ and $W_{\rm odd}^{(j)}$ are both invariant subspaces of $J_x^{(j)}$ and $\left(J^{(j)}_z\right)^2$.
\end{lemma}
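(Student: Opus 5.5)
The plan is to check the action of each operator on the basis vectors $\ket{j,m}_x$, using the explicit formulas already available. $W^{(j)}_{\rm even}$ and $W^{(j)}_{\rm odd}$ are spanned by disjoint subsets of this basis, indexed by $\mathcal{M}_{j,\rm even}$ and $\mathcal{M}_{j,\rm odd}$. So it suffices to show that each operator maps every basis vector $\ket{j,m}_x$ to a linear combination of basis vectors $\ket{j,m'}_x$ with $m'-m\in 2\mathbb{Z}$. Such vectors have $j-m'$ of the same parity as $j-m$, so they lie in the same subspace, and invariance follows by linearity.

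\emph{The operator $J_x^{(j)}$.} By Eq.~\eqref{eq:jx}, $J_x^{(j)}\ket{j,m}_x = m\ket{j,m}_x$. Each basis vector is therefore mapped to a multiple of itself, which trivially stays in whichever of $W^{(j)}_{\rm even}$, $W^{(j)}_{\rm odd}$ contains it.

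\emph{The operator $\left(J_z^{(j)}\right)^2$.} I would invoke Fact~\ref{fact:jx-jz1}, which writes $\left(J_z^{(j)}\right)^2 = D^{(j)}+T^{(j)}$.
\begin{itemize}
\item[(i)] $D^{(j)}$ is diagonal in the basis $\{\ket{j,m}_x\}$, so it preserves both subspaces.
\item[(ii)] $T^{(j)}$ is a sum of terms $\ket{j,m+2}_x\bra{j,m}_x$ and their adjoints. These send $\ket{j,m}_x$ to a multiple of $\ket{j,m\pm 2}_x$, or to zero at the boundary. Since $(j-(m\pm2)) - (j-m) = \mp 2$ is even, parity is preserved.
\end{itemize}
Alternatively, without the Fact, one can argue from Eq.~\eqref{eq:jz}: $J_z^{(j)}$ shifts $m$ by $\pm1$, so $J_z^{(j)}$ swaps $W^{(j)}_{\rm even}$ and $W^{(j)}_{\rm odd}$. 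Its square then maps each of these subspaces back into itself.

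There is no real obstacle here. The only care needed is at the boundary terms $m=\pm j$, where the square-root coefficients vanish so that no state outside $\mathcal{M}_j$ appears. This is already handled by the ranges of summation in Fact~\ref{fact:jx-jz1}.

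Finally, since $W^{(j)} = W^{(j)}_{\rm even}\oplus W^{(j)}_{\rm odd}$ and both operators are Hermitian, the orthogonal complement of each invariant subspace is also invariant. This is consistent with the block-diagonalization that the subsequent argument will use.
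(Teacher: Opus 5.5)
Your proof is correct and is essentially the paper's argument: the paper says only that the lemma follows directly from the diagonal action of $J_x^{(j)}$ and from the decomposition $\left(J_z^{(j)}\right)^2 = D^{(j)} + T^{(j)}$ of Fact~\ref{fact:jx-jz1}, and you spell out exactly that. Your alternative argument, that $J_z^{(j)}$ shifts $m$ by $\pm 1$ and so its square preserves parity, is also valid and does not need the Fact.
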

\begin{proof}
    Follows directly from Eq.~\eqref{eq:jx} and Fact~\ref{fact:jx-jz1}.
\end{proof}
The invariance of these two subspaces for all $j\in \mathcal{J}_n$ immediately implies the following. 

 \begin{lemma}\label{lem:g-gg} \par\noindent 
 \begin{enumerate}
     \item $\mathfrak{g}\subseteq \bigoplus_j\mathfrak{u}(W^{(j)}_{\rm even}) \oplus \mathfrak{u}(W^{(j)}_{\rm odd})$.
     \item $[\mfg,\mfg]\subseteq \bigoplus_j\mathfrak{su}(W^{(j)}_{\rm even})\oplus \mathfrak{su}(W^{(j)}_{\rm odd})\cong  \bigoplus_{j\in\mathcal{J}_n} \mathfrak{su}(\left\lfloor j\right\rfloor +1)\oplus \mathfrak{su}(\left\lceil j \right\rceil)$. \label{line:g-gg-2}
 \end{enumerate}
 \end{lemma}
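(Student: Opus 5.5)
Both items follow from the invariance lemma just proved, together with the isomorphism of Lemma~\ref{lem:la-iso}. Throughout we identify $\mfg$ with $\langle\{iJ_x, iJ_z^2\}\rangle_{\text{Lie},\mathbb{R}}$, acting on $\bigoplus_{j\in\mathcal{J}_n} W^{(j)}$.

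For item 1, write $W^{(j)} = W^{(j)}_{\rm even}\oplus W^{(j)}_{\rm odd}$; this is an orthogonal decomposition because the $\ket{j,m}_x$ form an orthonormal eigenbasis of the Hermitian operator $J_x^{(j)}$. An operator $iH\in\mfu(\bigoplus_j W^{(j)})$ leaves every summand $W^{(j)}_{\rm even}$ and $W^{(j)}_{\rm odd}$ invariant exactly when it is block diagonal with respect to this finer decomposition. In that case $iH$ lies in $\bigoplus_j \mfu(W^{(j)}_{\rm even})\oplus\mfu(W^{(j)}_{\rm odd})$, which we denote $\mathfrak{h}$.

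The preceding lemma shows that both generators $iJ_x=i\bigoplus_j J_x^{(j)}$ and $iJ_z^2=i\bigoplus_j (J_z^{(j)})^2$ preserve each summand. Hence both generators lie in $\mathfrak{h}$. Since $\mathfrak{h}$ is a real Lie subalgebra of $\mfu$, being closed under commutators of block-diagonal matrices, it contains every nested commutator of the generators and their real span. Therefore $\mfg\subseteq\mathfrak{h}$.

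For item 2, take $A,B\in\mfg\subseteq\mathfrak{h}$. The commutator $[A,B]$ is computed blockwise, and each block is a commutator $[A_b,B_b]$ of skew-Hermitian operators on that block. Each such block is therefore traceless, since $\operatorname{tr}(A_bB_b-B_bA_b)=0$. So $[A,B]\in\bigoplus_j\mfsu(W^{(j)}_{\rm even})\oplus\mfsu(W^{(j)}_{\rm odd})$, and taking real spans gives the stated inclusion for $[\mfg,\mfg]$.

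The final isomorphism comes from $\mfsu(V)\cong\mfsu(\dim V)$ together with the dimensions in Eq.~\eqref{eq:dimWevenodd}, namely $\lfloor j\rfloor+1$ for $W^{(j)}_{\rm even}$ and $\lceil j\rceil$ for $W^{(j)}_{\rm odd}$. When a dimension is $0$ or $1$, the corresponding summand is zero, consistent with the convention $\mfsu(0)=\mfsu(1)=0$.

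There is no real obstacle here; the only point needing care is that the tracelessness holds block by block, not merely overall. That follows because block-diagonal matrices multiply blockwise.
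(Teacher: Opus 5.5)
Your proposal is correct and matches the paper, which simply states that the invariance of $W^{(j)}_{\rm even}$ and $W^{(j)}_{\rm odd}$ under both generators immediately implies the lemma. You have supplied the routine details the paper leaves implicit: block-diagonal operators form a Lie subalgebra containing the generators, and commutators are traceless block by block.
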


From Eq.~\eqref{eq:dimWevenodd} it follows that
\eql{ 
\dim([\mfg,\mfg]) &\le \sum_{j\in\mathcal{J}_n} \big(\dim\left(\mathfrak{su}(\left\lfloor j\right\rfloor +1)\right) + \dim\left(\mathfrak{su}(\left\lceil j \right\rceil)\right)\big) \notag\\
&=\begin{cases}
       \frac{1}{12} (n^3+6n^2+2n ), & \quad \text{if $n$ is even;}\\
        \frac{1}{12} (n^3+6n^2-n-6), &\quad \text{if $n$ is odd.}
\end{cases}\label{eq:dimgg}
}
where the second line uses the fact that $\dim(\mathfrak{su}(N)) = N^2-1$ for $N > 1$ and is zero for $N=0,1$. However, the dimension of $[\mfg,\mfg]$ is known (\cite{allcock2026}) to be exactly equal to Eq.~\eqref{eq:dimgg}, and thus the inclusion $\subseteq$ in Lemma~\ref{lem:g-gg} \eqref{line:g-gg-2} must actually be equality, i.e, 
\eq{
[\mfg,\mfg]=\bigoplus_j\mathfrak{su}(W^{(j)}_\text{even})\oplus \mathfrak{su}(W^{(j)}_\text{odd})\cong  \bigoplus_{j\in\mathcal{J}_n} \mathfrak{su}(\left\lfloor j\right\rfloor +1)\oplus \mathfrak{su}(\left\lceil j \right\rceil).
}
This proves Theorem~\ref{thm:gg}.

However, this proof has two unsatisfactory aspects. First, it relies on the dimension bound from~\cite{allcock2026}, which was based on a lengthy and complicated analysis of a basis for the DLA. Second, the matching dimensions force the equality of spaces, but it does not reveal \textit{how} the entire space $\mathfrak{su}(W^{(j)}_\text{even})\oplus \mathfrak{su}(W^{(j)}_\text{odd})$ is generated from the generators $iJ_x^{(j)}$ and $i\big(J_z^{(j)}\big)^2$. 

In the remainder of this section we give a concise ab initio proof of the following reverse inclusion which, along with Eq.~\eqref{eq:dimWevenodd}, completes the proof of Theorem~\ref{thm:gg} in a self-contained fashion.

 \begin{lemma}\label{lem:reverse-inclusion} $\bigoplus_j\mathfrak{su}(W^{(j)}_{\rm even})\oplus \mathfrak{su}(W^{(j)}_{\rm odd})\subseteq[\mfg,\mfg]$.
 \end{lemma}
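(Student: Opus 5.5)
We prove the reverse inclusion ab initio. We first separate the blocks using the spectrum of $J_x$, then use the tridiagonal (in steps of $2$) structure of $(J_z^{(j)})^2$ to generate every matrix unit inside each block. Throughout we work in $\mfg\cong\langle\{iJ_x,iJ_z^2\}\rangle_{\text{Lie},\mathbb{R}}$ (Lemma~\ref{lem:la-iso}) with the basis $\{\ket{j,m}_x\}$.

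\textbf{Step 1: isolating single transitions.} Since $J_x$ is diagonal with entry $m$ on $\ket{j,m}_x$, the operator $\mathrm{ad}_{iJ_x}$ acts on $E^{(j)}_{m',m}=\ket{j,m'}_x\bra{j,m}_x$ by multiplication by $i(m'-m)$. Hence $\mathrm{ad}_{iJ_x}$ annihilates $iD^{(j)}$ and maps the off-diagonal part as follows: $iT\mapsto$ a combination of the operators $t^{(j)}_m(E_{m+2,m}-E_{m,m+2})$, with eigenvalue $\pm 2i$. Applying $\mathrm{ad}_{iJ_x}$ twice shows that $\mfg$ contains the generators $iT:=i\bigoplus_j T^{(j)}$, $A:=\bigoplus_j\sum_m t^{(j)}_m(E_{m+2,m}-E_{m,m+2})$, and $iD:=i\bigoplus_j D^{(j)}$.

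To separate the individual terms $m$, we use the diagonal element $iD$, whose entries $\tfrac12(j(j+1)-m^2)$ depend on both $j$ and $m$. The commutator of a diagonal $i\Lambda$ with $E_{m+2,m}$ multiplies it by $i(\lambda_{m+2}-\lambda_m)$. For $D$ this factor is $-\tfrac12((m+2)^2-m^2)=-2(m+1)$, which is independent of $j$. We therefore also need the operators $J_x^k$, which are not obviously in $\mfg$, so we proceed iteratively. The commutator $[A,iT]$ is diagonal, with entries involving $(t^{(j)}_m)^2-(t^{(j)}_{m-2})^2$, which do depend on $j$. Repeated brackets of the available diagonal elements with $A$ and $iT$ therefore produce off-diagonal elements whose coefficients are $t^{(j)}_m$ weighted by polynomials in the distinct differences. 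A Vandermonde argument then projects onto the individual pairs $(j,m)$ whenever those differences are pairwise distinct. This yields each $i(E_{m+2,m}+E_{m,m+2})$ and $E_{m+2,m}-E_{m,m+2}$ on a single block $W^{(j)}_{\rm even}$ or $W^{(j)}_{\rm odd}$. These elements lie in $[\mfg,\mfg]$ because they are commutators.

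\textbf{Step 2: generating each $\mathfrak{su}$.} Within $W^{(j)}_{\rm even}$, for example, the basis vectors $\ket{j,j}_x,\ket{j,j-2}_x,\dots$ form a path graph, and by Step 1 we have both the real and the imaginary nearest-neighbour generators on every edge of this path. These edge generators span a copy of $\mathfrak{su}(2)$ on each edge, including the diagonal differences $i(E_{mm}-E_{m+2,m+2})$. Commutators along the path then generate every $E_{ab}-E_{ba}$ and $i(E_{ab}+E_{ba})$, so the edge generators produce all of $\mathfrak{su}(W^{(j)}_{\rm even})$. This is the standard fact that a connected path of two-level rotations generates $\mathfrak{su}(N)$. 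The nonvanishing of $t^{(j)}_m$ (Fact~\ref{fact:jx-jz1}) guarantees that the path is connected. Taking the direct sum over all $j$ and both parities gives the lemma.

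\textbf{Main obstacle.} The hard part is the separation in Step 1, specifically showing that the eigenvalue families used to distinguish different $(j,m)$ pairs are genuinely distinct. Blocks with equal dimension, for example $W^{(j)}_{\rm odd}$ and $W^{(j')}_{\rm even}$ with matching sizes, could have coincident spectra and resist separation by diagonal elements. The first thing I would try is to use the explicit diagonal commutator $[E,T]$-type elements with entries $(t^{(j)}_m)^2-(t^{(j)}_{m-2})^2$. These are polynomials in $j(j+1)$ and $m$, and I would check directly that they are injective enough on the relevant index pairs. If that fails, I would fall back on a simplicity argument. The projection of $[\mfg,\mfg]$ onto each block is all of $\mathfrak{su}$ by Step 2 applied blockwise, which needs no separation. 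A Goursat-type lemma then shows that $[\mfg,\mfg]$ is the full direct sum unless two blocks are linked by an isomorphism. Such a link can be excluded by comparing the eigenvalues of $iJ_x$ restricted to the blocks: these are $\{j,j-2,\dots\}$ versus $\{j-1,j-3,\dots\}$, which differ up to sign and shifts by a multiple of the identity, and this rules out any intertwining isomorphism between the blocks.
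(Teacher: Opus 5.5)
\textbf{There is a genuine gap: the separation step that the whole argument rests on is asserted, not proven.} Your outline is the paper's route. You extract $iT$ and $iD$ with $\mathrm{ad}_{iJ_x}$. You get a $j$-dependent diagonal element from $[A,iT]$, which is a multiple of the paper's $[iT,[iT,iJ_x]]$. You isolate $iX^{(j)}_{m+2,m}$ and $iY^{(j)}_{m+2,m}$ by a Vandermonde argument, then build each $\mathfrak{su}$ along the path. But you never compute the diagonal entries of $[A,iT]$, and you never show that the resulting bracket coefficients separate all pairs $(j,m)$; you flag this yourself as the main obstacle.

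Moreover, ``pairwise distinct'' is the wrong condition. A diagonal $iA$ makes $\mathrm{ad}_{iA}$ act on each plane $\mathrm{span}\{iX^{(j)}_{m+2,m},iY^{(j)}_{m+2,m}\}$ as a rotation with rate $\lambda_{j,m}$, so the even powers $\mathrm{ad}^{2r}_{iA}(iT)$ only see $\lambda_{j,m}^2$. The orbit of $iT$ can therefore separate $k$ from $k'$ only if $\lambda_k\neq\pm\lambda_{k'}$, and you also need $\lambda_k\neq 0$ to recover the $Y$'s. For example, $D$ alone gives $-2(m+1)$, which is sign-symmetric under $m\mapsto -m-2$, so a constant shift from $iJ_x$ is indispensable.

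The missing computation is short. On block $j$, the diagonal entries of $[[iT,iJ_x],iT]$ are, up to sign, $2m^3-[2j(j+1)-1]m$. Its bracket with $\ket{j,m+2}_x\bra{j,m}_x$ therefore carries the factor $12(m+1)^2+6-4j(j+1)$, which is the only available source of $j$-dependence. Adding $(3-\beta)iJ_x+\alpha iD$ gives $\lambda_{j,m}=-4j(j+1)+12(m+1)^2+2\beta+2\alpha(m+1)$. Choose $\{1,\alpha,\beta\}$ linearly independent over $\mathbb{Q}$. Comparing the $\alpha$- and $\beta$-coefficients then gives nonvanishing and distinctness up to sign in two lines, as in the paper's appendix. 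Without this, your Step 1 is a plan rather than a proof.

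\textbf{Your fallback does not work as stated either.} Eigenvalues of $iJ_x$ cannot rule out a linking isomorphism between blocks of equal size.
\begin{itemize}
\item For integer $j$, the blocks $W^{(j)}_{\rm odd}$ and $W^{(j-1)}_{\rm even}$ carry exactly the same labels $m\in\{j-1,j-3,\dots,-j+1\}$. So $J_x$ restricts to the same matrix on both, and $D$ differs only by the scalar $j$.
\item For half-integer $j$, $W^{(j)}_{\rm even}$ and $W^{(j)}_{\rm odd}$ have the same dimension. After removing traces, both restrictions of $J_x$ have spectrum $\{j-\frac12,j-\frac52,\dots,-j+\frac12\}$.
\end{itemize}
So your claim that $\{j,j-2,\dots\}$ and $\{j-1,j-3,\dots\}$ differ up to sign and scalar shifts is false. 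It also ignores the cross-$j$ pairs you mentioned yourself. These blocks can only be told apart through the $j$-dependence of $t^{(j)}_m$. Any Goursat-type argument would therefore also have to go through $T$ (or $[A,iT]$), which is exactly the analysis you left undone.
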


To prove Lemma~\ref{lem:reverse-inclusion}, we will show that a basis for $\mathfrak{su}\left(W^{(j)}_\text{even}\right)$ and $\mathfrak{su}\left(W^{(j)}_\text{odd}\right)$ is in $[\mfg, \mfg]$, for all $j$.   We first note the following fact, which can be verified by direct computation.
\begin{fact}\label{fact:jx-jz2}
Applying the adjoint maps ${\rm ad}_{iJ_x^{(j)}}$ and ${\rm ad}_{iT^{(j)}}$ twice, to $i(J_z^{(j)})^2$ and $iJ_x^{(j)}$, respectively, has the following effect.
\begin{enumerate}
    \item $[iJ_x^{(j)},[iJ_x^{(j)}, i(J_z^{(j)})^2]]=-4iT^{(j)}$.
    \item $[iT^{(j)},[iT^{(j)},iJ_x^{(j)}]] = -i\left[2j(j+1)-1\right]J_x^{(j)} + 2i\left(J_x^{(j)}\right)^3$.
\end{enumerate}
\end{fact}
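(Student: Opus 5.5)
The statement to prove is Fact~\ref{fact:jx-jz2}, and the plan is a direct computation in the basis $\{\ket{j,m}_x\}$. Both identities use only two structural properties. First, $J_x^{(j)}$ is diagonal with entries $m$. Second, by Fact~\ref{fact:jx-jz1}, $(J_z^{(j)})^2 = D^{(j)} + T^{(j)}$, where $D^{(j)}$ is diagonal and $T^{(j)}$ is supported on the entries $(m+2,m)$ and $(m,m+2)$.

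For part 1, note that the adjoint action of a diagonal matrix on a matrix unit is
\[
[J_x^{(j)}, \ket{j,m'}_x\bra{j,m}_x] = (m'-m)\,\ket{j,m'}_x\bra{j,m}_x .
\]
Hence $D^{(j)}$ commutes with $J_x^{(j)}$ and drops out. Each off-diagonal matrix unit of $T^{(j)}$ has $m'-m=\pm 2$, so applying $\mathrm{ad}_{J_x^{(j)}}$ twice multiplies it by $(\pm2)^2=4$. Therefore $[J_x^{(j)},[J_x^{(j)},T^{(j)}]]=4T^{(j)}$. Including the factors of $i$, we get $[iJ_x,[iJ_x,iT]] = i^3\cdot 4T = -4iT$, which is the claimed identity.

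For part 2, I first compute $[T^{(j)},J_x^{(j)}]$. By the same matrix-unit rule it equals
\[
\sum_m t_m^{(j)}\,(-2)\bigl(\ket{m+2}\bra{m} - \ket{m}\bra{m+2}\bigr),
\]
which is antisymmetric. I then commute this with $T^{(j)}$ again. The result has components along the shifts $0$ and $\pm4$. The $\pm4$ components cancel by symmetry: the coefficient of $\ket{m+4}\bra{m}$ is $t_{m+2}t_m(-2)-t_{m+2}t_m(-2)$-type terms with opposite signs, which sum to zero. The diagonal entry at $m$ comes out as $4\bigl((t^{(j)}_{m-2})^2 - (t^{(j)}_m)^2\bigr)$, up to an overall sign that I will track carefully, with the convention $t_m=0$ outside the range $-j\le m\le j-2$.

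The main step is then to show that this diagonal entry equals $-(2j(j+1)-1)m + 2m^3$, after accounting for the factor $i^3=-i$. Writing $c=j(j+1)$, equation~\eqref{eq:tjm} gives
\[
16\,t_m^2 = \bigl(c-m(m+1)\bigr)\bigl(c-(m+1)(m+2)\bigr).
\]
Expanding $16(t_{m-2}^2 - t_m^2)$, the products share the factor $(c-m(m-1))$ versus $(c-(m+1)(m+2))$ structure. The difference should factor as $(c - m^2 - \dots)$ times a linear term in $m$, yielding a cubic odd polynomial in $m$; I expect this to be $8\bigl(2m^3-(2c-1)m\bigr)$ or similar.

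The boundary cases $m=\pm j,\pm(j-1)$ need a separate check. There the missing $t$'s vanish, and the corresponding closed-form factors $c-j(j+1)$ also vanish, so the polynomial formula stays valid at the ends. Finally, I would sign-check the whole identity on $j=1$, where $T$ has a single entry $t_{-1}=-\tfrac12$.

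I expect this polynomial identity, together with consistent sign bookkeeping, to be the only real obstacle. Everything else is mechanical.
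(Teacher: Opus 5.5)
Your plan is correct and is the direct computation the paper has in mind; the paper gives no further detail. Part 1 holds because $\mathrm{ad}_{J_x^{(j)}}$ scales the $\pm2$-shift matrix units of $T^{(j)}$ by $\pm2$. In part 2, $[T^{(j)},[T^{(j)},J_x^{(j)}]]$ is diagonal with entries $4\bigl((t^{(j)}_{m-2})^2-(t^{(j)}_m)^2\bigr)$, and the $\pm4$ shifts cancel as you say. Carrying out your expansion with $c=j(j+1)$ gives $16\bigl((t^{(j)}_{m-2})^2-(t^{(j)}_m)^2\bigr)=4(2c-1)m-8m^3$, so your guessed $8\bigl(2m^3-(2c-1)m\bigr)$ should be $-4\bigl(2m^3-(2c-1)m\bigr)$. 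Multiplying by $i^3=-i$ then yields exactly $-i(2c-1)m+2im^3$, with the boundary terms vanishing as you describe.
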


\begin{lemma}\label{lem:things-in-g} 
    Define $T=\bigoplus_j T^{(j)}$ and $D=\bigoplus_j D^{(j)}$. The following are in $\mfg$: (1) $iT$, (2) $iD$, and (3) $-i\bigoplus_j \left[2j(j+1)-1\right]J_x^{(j)} + 2iJ_x^3$. 
\end{lemma}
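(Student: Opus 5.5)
We work throughout with the identification $\mfg \cong \langle \{iJ_x, iJ_z^2\}\rangle_{\text{Lie},\mathbb{R}}$ from Lemma~\ref{lem:la-iso}. Under it, $iJ_x$ and $iJ_z^2$ lie in $\mfg$, and every element we consider is block diagonal with respect to $\bigoplus_j W^{(j)}$. The one structural point is that brackets of block-diagonal operators are computed blockwise: for $A=\bigoplus_j A^{(j)}$ and $B=\bigoplus_j B^{(j)}$ we have $[A,B]=\bigoplus_j [A^{(j)},B^{(j)}]$. Consequently every identity in Fact~\ref{fact:jx-jz2}, stated for a single $j$, lifts to the direct sum with the same coefficient in each block. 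This works only because the coefficients in Fact~\ref{fact:jx-jz2}(1) are independent of $j$; in Fact~\ref{fact:jx-jz2}(2) the $j$-dependence is simply carried inside the direct sum.

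\textbf{Step (1).} Since $\mfg$ is closed under brackets with its own elements, $[iJ_x,[iJ_x,iJ_z^2]]\in\mfg$. Computing blockwise and applying Fact~\ref{fact:jx-jz2}(1) in each block gives $[iJ_x,[iJ_x,iJ_z^2]]=\bigoplus_j(-4iT^{(j)})=-4iT$. Because $\mfg$ is a real vector space, dividing by $-4$ shows $iT\in\mfg$.

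\textbf{Step (2).} By Fact~\ref{fact:jx-jz1}, $J_z^2=\bigoplus_j\bigl(D^{(j)}+T^{(j)}\bigr)=D+T$. Hence $iD=iJ_z^2-iT$, a real linear combination of elements of $\mfg$, so $iD\in\mfg$.

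\textbf{Step (3).} Now that $iT\in\mfg$, we also have $[iT,[iT,iJ_x]]\in\mfg$. Computing blockwise with Fact~\ref{fact:jx-jz2}(2) gives
\[
\bigoplus_j\Bigl(-i\left[2j(j+1)-1\right]J_x^{(j)}+2i\bigl(J_x^{(j)}\bigr)^3\Bigr)=-i\bigoplus_j\left[2j(j+1)-1\right]J_x^{(j)}+2iJ_x^3,
\]
using $\bigoplus_j\bigl(J_x^{(j)}\bigr)^3=J_x^3$. This is exactly the third element, so it lies in $\mfg$.

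The only step with real content is the verification of Fact~\ref{fact:jx-jz2} itself, which the paper treats as given. If one wanted to check it, the approach would be as follows. In the $\ket{j,m}_x$ basis, $\mathrm{ad}_{iJ_x}$ multiplies the matrix unit $\ket{m'}\bra{m}$ by $-i(m'-m)$. It therefore annihilates the diagonal part $D^{(j)}$ and multiplies the $\pm2$ off-diagonal part $T^{(j)}$ by $(-i\cdot\pm2)^2=-4$, which gives item (1). Item (2) follows by expanding the double commutator of the tridiagonal-in-steps-of-two matrix $T^{(j)}$ with the diagonal $J_x^{(j)}$ and substituting Eq.~\eqref{eq:tjm}. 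Beyond that check, the argument is pure bookkeeping.
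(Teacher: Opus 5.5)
Your proposal is correct and is essentially the paper's proof. You apply Fact~\ref{fact:jx-jz2}(1) blockwise to obtain $iT$, subtract it from the generator $iJ_z^2$ to obtain $iD$, and apply Fact~\ref{fact:jx-jz2}(2) blockwise for the third element. One harmless slip in your side remark: $\mathrm{ad}_{iJ_x}$ multiplies $\ket{m'}\bra{m}$ by $i(m'-m)$, not $-i(m'-m)$, which leaves the squared factor $-4$ unchanged.
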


\begin{proof}
    Applying Fact~\ref{fact:jx-jz2} to $[iJ_x,[iJ_x,i(J_z)^2]]$ (for all $j$) gives $iT\in \mfg$. 
    It then follows that $iD\in\mfg$ since $i(J_z)^2=iD + iT$ is one of the generators of $\mfg$. The third fact follows from Fact~\ref{fact:jx-jz2}  and the definition of $J_x = \bigoplus_j J_x^{(j)}$.
\end{proof}

Now, for a vector space $V$ with orthonormal basis $\{\ket{1},\ldots, \ket{d}\}$, a standard basis for $\mathfrak{su}(V)$ consists of $\{iX_{p,q}, iY_{p,q}\}_{p,q\in [d], q < p}\cup \{iZ_{p+1,p}\}_{p\in [d-1]}$, where 
\begin{align*}
    iX_{p,q}&= i\left(\ket{p}\bra{q} + \ket{q}\bra{p}\right),\\
    iY_{p,q}&= \ket{p}\bra{q} - \ket{q}\bra{p},\\
iZ_{p+1,p} &= i\left(\ket{p+1}\bra{p+1} - \ket{p}\bra{p}\right).
\end{align*} 
Note that
\eq{
    [iX_{p+1,p}, iY_{p+1,p}] =-2i Z_{p+1,p}, \qquad [iX_{p,q}, iX_{q,r}] = -iY_{p,r}, \qquad 
[iX_{p,q}, iY_{q,r}] = iX_{p,r}, 
}
and therefore all basis vectors can be generated by taking successive commutators of  $iX_{p+1,p}$, $ iY_{p+1,p}$.  Since $W^{(j)}_\text{even}$ and $W^{(j)}_\text{odd}$ are spanned by basis vectors $\ket{j,m}_x$ whose $m$ indices differ by multiples of two (see Def.~\ref{def:Wspaces}),  the following lemma suffices to show that $[\mfg,\mfg]$ contains a basis for all $\mathfrak{su}\left(W^{(j)}_\text{even}\right)$ and $\mathfrak{su}\left(W^{(j)}_\text{odd}\right)$.

\begin{lemma}\label{lem:XYing} For all $j\in \mathcal{J}_n$ and all $m\in\{j-2, \ldots, -j\}$ the following are in $[\mfg,\mfg]$.
\eq{
iX^{(j)}_{m+2,m}&=i\left(\ket{j,m+2}_x\bra{j,m}_x + \ket{j,m}_x\bra{j,m+2}_x\right),\\
iY^{(j)}_{m+2,m}&=\ket{j,m+2}_x\bra{j,m}_x - \ket{j,m}_x\bra{j,m+2}_x.
}
\end{lemma}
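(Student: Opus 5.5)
We need to isolate each individual $X^{(j)}_{m+2,m}$ and $Y^{(j)}_{m+2,m}$ from $iT=\bigoplus_j\sum_m t^{(j)}_m X^{(j)}_{m+2,m}$ and $iJ_x$. The plan is a spectral-separation argument: use the adjoint action of diagonal elements of $\mfg$ to split $iT$ into its individual two-by-two components.

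\textbf{Step 1: the action of $\mathrm{ad}_{iJ_x}$ on a single pair.} For each block $j$ and each pair $(m+2,m)$, a direct computation with Eq.~\eqref{eq:jx} gives
\[
[iJ_x,\, iX^{(j)}_{m+2,m}] = -2\,iY^{(j)}_{m+2,m}, \qquad [iJ_x,\, iY^{(j)}_{m+2,m}] = 2\,iX^{(j)}_{m+2,m},
\]
since $J_x^{(j)}$ has eigenvalues differing by exactly $2$ on each pair. Hence $\mathrm{ad}_{iJ_x}^2 = -4$ on $\spn\{X,Y\}$ for every pair, so $\mathrm{ad}_{iJ_x}$ by itself cannot separate the pairs. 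This is why we also need cubic diagonal elements.

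\textbf{Step 2: a diagonal element with distinct pair frequencies.} By Lemma~\ref{lem:things-in-g}(3), combined with $iJ_x \in \mfg$, the element
\[
iK = -i\bigoplus_j \left[2j(j+1)-1\right]J_x^{(j)} + 2iJ_x^3
\]
lies in $\mfg$. It is diagonal in the $\ket{j,m}_x$ basis, with entry $k_j(m) = -(2j(j+1)-1)m + 2m^3$. Its adjoint action on a pair satisfies
\[
\mathrm{ad}_{iK}^2 = -\lambda^2 \quad \text{on } \spn\{X^{(j)}_{m+2,m}, Y^{(j)}_{m+2,m}\}, \qquad \lambda = k_j(m+2)-k_j(m).
\]
Writing $\mu = m+1$ and expanding,
\[
\lambda = 2\left(2\cdot(6\mu^2+2)/2\right) - 2(2j(j+1)-1) = 12\mu^2 + 6 - 4j(j+1).
\]
Two pairs $(j,\mu)$ and $(j',\mu')$ have the same $\lambda^2$ when $3\mu^2 - j(j+1) = \pm\left(3\mu'^2 - j'(j'+1)\right)$, up to the constant $3/2$. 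Within a fixed $j$, pairs $\mu$ and $-\mu$ collide. Across different $j$, collisions are also possible.

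\textbf{Step 3: separating the pairs.} To handle collisions I will work with a commuting family of diagonal elements. Repeating the Fact~\ref{fact:jx-jz2} computation with $[iT,[iT,\cdot\,]]$ applied to $iJ_x^3$, and to further odd powers, should yield elements of $\mfg$ of the form odd polynomials in $J_x$ with $j$-dependent coefficients. Together with $iD$, which is diagonal with entries $\tfrac12(j(j+1)-m^2)$ and hence even in $m$, these give frequencies on pairs. The symmetry $\mu \mapsto -\mu$ is broken by the odd part: $\mathrm{ad}_{iJ_x}$ rotates the $(X,Y)$ plane in the opposite sense relative to $iK$ for $\mu$ versus $-\mu$. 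The cleaner way to use this is to complexify. Each pair contributes the root vectors $\ket{j,m+2}\bra{j,m}$, with weight given by the vector of eigenvalue differences of the diagonal family $(J_x, K, D, \ldots)$. Take polynomial (Lagrange) combinations of the commuting operators $\mathrm{ad}_{iJ_x}$ and $\mathrm{ad}_{iK}$ acting on $iT$. If the signed weight vectors are pairwise distinct across pairs, this projects $iT$ onto each individual pair component $t^{(j)}_m\, iX^{(j)}_{m+2,m}$. Because $t^{(j)}_m \neq 0$, dividing gives $iX^{(j)}_{m+2,m}$, and one application of $\mathrm{ad}_{iJ_x}$ then gives $iY^{(j)}_{m+2,m}$. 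Both are commutators, so they lie in $[\mfg,\mfg]$.

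\textbf{Main obstacle.} The hard step is proving that the weights are pairwise distinct. This is most delicate across different $j$, since $J_x$ gives the same difference $2$ everywhere and $K$ has possible coincidences. I will first check whether the pair $(\lambda_K,\lambda_D)$ separates all pairs. Here $\lambda_D = -\tfrac12\left((m+2)^2 - m^2\right) = -2\mu$ already separates the pairs within a block, and $\lambda_K$ then determines $j(j+1)$, hence $j$. Since $D$ and $K$ commute and both lie in $\mfg$, I expect this to work. The sign issue for real skew-Hermitian elements is handled by the complexification argument, which is legitimate because the subspace spanned by nested commutators is invariant under real polynomials in the ad-operators, and the real and imaginary parts of each projection can be recovered.
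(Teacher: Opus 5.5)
Your proposal is correct and is essentially the paper's argument. The paper realizes your joint spectral projection with the single generic element $iA_{\alpha,\beta}=-iK+(3-\beta)iJ_x+\alpha iD$, where $\{1,\alpha,\beta\}$ is independent over $\mathbb{Q}$, followed by a Vandermonde inversion; Appendix~\ref{app:app1} checks exactly your separation, with $\lambda_D=-2\mu$ fixing $m$ and $\lambda_K$ then fixing $j(j+1)$.

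Two corrections to your reasoning. First, what rules out sign collisions is the $J_x$ coordinate of the weight (the paper's $\beta$ term), not complexification. With $(K,D)$ alone you get $(\lambda_K,\lambda_D)=(18,-5)$ at $(j,m)=(7/2,3/2)$ and $(-18,5)$ at $(9/2,-7/2)$, and no polynomial in $\mathrm{ad}_{iK},\mathrm{ad}_{iD}$ can separate these two pairs. Second, $\mathrm{ad}_{iJ_x}$ does not break the symmetry $\mu\mapsto-\mu$, since its frequency $2$ and $\lambda_K$ are both even in $\mu$; only $D$ breaks it.
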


\begin{proof}
From Lemma~\ref{lem:things-in-g} we have that $iT,~iD,~i\bigoplus_j\left[2j(j+1)-1\right]J_x^{(j)}-2iJ_x^3 \in \mfg$, and we shall show how to use these to generate $iX^{(j)}_{m+2,m}$ and $iY^{(j)}_{m+2,m}$.  Note that any operator $A^{(j)} = \sum_{m\in \mathcal{M}_j}a_m\ket{j,m}_x\bra{j,m}_x$ that is diagonal in the $\{\ket{j,m}_x\}$ basis satisfies
\eql{
[iA^{(j)},iX_{m+2,m}^{(j)}] &= -(a_{m+2}-a_m)(iY^{(j)}_{m+2,m}),\label{eq:commAX}\\
[iA^{(j)},iY_{m+2,m}^{(j)}] &= (a_{m+2}-a_m)(iX^{(j)}_{m+2,m}).\label{eq:commAY}
}

Using the definitions of $D^{(j)}$ and $T^{(j)}$ from Fact~\ref{fact:jx-jz1}, and noting that $D^{(j)}$ and  $J_x^{(j)}$ are diagonal in the $\{\ket{j,m}_x\}$ basis, and that $T^{(j)} =\sum_{m=-j}^{j-2} t_m^{(j)} X_{m+2,m}^{(j)}$, the following are easy to verify:
\eql{
   [iD^{(j)}, iT^{(j)}]&= \sum_{m=-j}^{j-2} 2(m+1) t^{(j)}_m (iY^{(j)}_{m+2,m}),\label{eq:DT}\\  
     [iJ_x^{(j)}, iT^{(j)}]&=-2\sum_{m=-j}^{j-2}  t^{(j)}_m(iY^{(j)}_{m+2,m}),\\ 
\left[i\left(J_x^{(j)}\right)^3, iT^{(j)}\right]&=-2\sum_{m=-j}^{j-2}  (3m^2+6m+4)t^{(j)}_m(iY^{(j)}_{m+2,m}). \label{eq:J3T}
}
Now, for $\alpha,\beta\in \mathbb{R}$, define 
\eq{
iA_{\alpha,\beta} &= \left(i\bigoplus_j\left[2j(j+1)-1\right]J_x^{(j)} - 2iJ_x^3 \right) +(3-\beta) iJ_x+\alpha iD
}
which is in $\mfg$.  Write $A_{\alpha,\beta} = \bigoplus_j A_{\alpha,\beta}^{(j)}$. Using Eqs.~\eqref{eq:DT}-\eqref{eq:J3T}, it can be verified that, 
\eq{
{\rm ad}_{iA^{(j)}_{\alpha,\beta}}(iT^{(j)}) = \lambda_m^{(j)}(\alpha,\beta)(iY_{m+2,m}^{(j)}),
}
where
\eql{
\lambda^{(j)}_m(\alpha,\beta)=-4j(j+1) + 12(m+1)^2 + 2\beta +2\alpha(m+1). \label{eq:lambda-mj}
}
Furthermore, using Eqs.~\eqref{eq:commAX} and \eqref{eq:commAY}, it follows that, for any integer $r\ge 1$, 

\eql{
{\rm ad}_{iA_{\alpha,\beta}}^{2r}(iT) &= \sum_{j\in \mathcal{J}_n}\sum_{m=-j}^{j-2} \left(-\left(\lambda^{(j)}_m(\alpha,\beta)\right)^2\right)^{r}t^{(j)}_m (iX^{(j)}_{m+2,m}), \label{eq:adA}\\ 
{\rm ad}_{iA_{\alpha,\beta}}^{2r+1}(iT) &= (-1)^r\sum_{j\in \mathcal{J}_n}\sum_{m=-j}^{j-2} \left(\lambda^{(j)}_{m}(\alpha,\beta)\right)^{2r +1}t^{(j)}_m (iY^{(j)}_{m+2,m}).\label{eq:adA2} 
}

To simplify notation, let $\mathcal{K}=\{(j,m) : j\in \mathcal{J}_n, m\in\{j-2, \ldots, -j\}\}$ be the set of all valid $(j,m)$ pairs, and $K=\abs{\mathcal{K}}$.  For $k=(j,m)\in \mathcal{K}$, we will denote 
\eq{
\lambda_k(\alpha,\beta) &= \lambda^{(j)}_m (\alpha,\beta), \qquad \sigma_k(\alpha,\beta) = -\left(\lambda_m^{(j)}(\alpha,\beta)\right)^2, \qquad t_k = t_m^{(j)},\\
i\tilde{X}_k &= iX^{(j)}_{m+2,m}, \qquad i\tilde{Y}_k = iY^{(j)}_{m+2,m}.
}

Then, Eq.~\eqref{eq:adA} and Eq.~\eqref{eq:adA2} can be expressed (suppressing the $\alpha,\beta$ parameters for notational simplicity) as the Vandermonde matrix equations

\eq{
\begin{pmatrix} 
\text{ad}_{iA_{\alpha,\beta}}^0(iT) \\ 
\text{ad}_{iA_{\alpha,\beta}}^2(iT) \\ 
\vdots \\ 
\text{ad}_{iA_{\alpha,\beta}}^{2K-2}(iT) 
\end{pmatrix} 
= 
\begin{pmatrix} 
1 & 1 & \dots & 1 \\ 
\sigma_1 & \sigma_2 & \dots & \sigma_K \\ 
\vdots & \vdots & \ddots & \vdots \\ 
\sigma_1^{K-1}& \sigma_2^{K-1} & \dots & \sigma_K^{K-1} 
\end{pmatrix} 
\begin{pmatrix} 
t_1 \left(i\tilde{X}_1\right) \\ 
t_2 \left(i\tilde{X}_2\right) \\ 
\vdots \\ 
t_K \left(i\tilde{X}_K\right)
\end{pmatrix}
}
and
\eq{
\begin{pmatrix} 
\text{ad}_{iA_{\alpha,\beta}}^1(iT) \\ 
\text{ad}_{iA_{\alpha,\beta}}^3(iT) \\ 
\vdots \\ 
\text{ad}_{iA_{\alpha,\beta}}^{2K-1}(iT) 
\end{pmatrix} 
= 
\begin{pmatrix} 
1 & 1 & \dots & 1 \\ 
\sigma_1 & \sigma_2 & \dots & \sigma_K \\ 
\vdots & \vdots & \ddots & \vdots \\ 
\sigma_1^{K-1}& \sigma_2^{K-1} & \dots & \sigma_K^{K-1} 
\end{pmatrix} 
\begin{pmatrix} 
t_1 \lambda_1 \left(i\tilde{Y}_1\right) \\ 
t_2 \lambda_2 \left(i\tilde{Y}_2\right) \\ 
\vdots \\ 
t_K\lambda_K \left(i\tilde{Y}_K\right)
\end{pmatrix},
}
respectively.  In Appendix~\ref{app:app1}, we show that for $\alpha, \beta\in\mathbb{R}$ such that $\{1,\alpha,\beta\}$ are linearly independent over $\mathbb{Q}$, $\sigma_k(\alpha,\beta) =\sigma_{k'}(\alpha,\beta)\Rightarrow k = k'$ and $\lambda_k(\alpha,\beta)\neq 0$. Since $t_k\neq 0$ for all $k$, it follows that, for such a choice of $\alpha,\beta$, both Vandermonde matrices can be inverted, showing that $i\tilde{X}_k=iX^{(j)}_{m+2,m}$ and $i\tilde{Y}_k=iY^{(j)}_{m+2,m}$ can be expressed as linear combinations of $\text{ad}_{iA_{\alpha,\beta}}^r(iT)$.
\end{proof}

\section{Loss function variance}\label{sec:variance}

In this section we prove Theorem~\ref{thm:loss-var}. To do so, we will make use of the following result, which relates the loss function variance to the DLA.
\begin{theorem}[\cite{ragone2023unified}]\label{thm:ragone}
    For a variational quantum algorithm with DLA $\mathfrak{g}$ satisfying $[\mfg,\mfg]=\bigoplus_j \mathfrak{g}_j$, if $O\in i\.g$ or $\rho\in i\.g$, and the circuit is deep enough to form a unitary 2-design, then 
\begin{equation}\label{eq:VQA-exp-var}
    \operatorname{Var}_{\theta}[\ell(\rho,O;\theta)] = \sum_{j=1}^k \frac{\+P_{\.g_j}(\rho) \+P_{\.g_j}(O)}{\dim(\.g_j)}.
\end{equation}
Here $\+P_{\.s}(H)$, the $\.s$-purity of a Hermitian operator $H\in i\.{u}(2^n)$, with respect to a subalgebra $\.s$ of $\.u(2^n)$, is defined as $\+P_{\.s}(H)= \tr(H_{\.s}^2)$
    where $H_\.s$ is the orthogonal projection of $H$ onto $\.s_{\mbC}$ (the complexification of $\.s$).
\end{theorem}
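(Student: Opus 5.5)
The plan is to reduce the statement to a Schur's-lemma computation for the adjoint action of the dynamical Lie group $G=e^{\mfg}$. First I will fix what ``deep enough to form a unitary $2$-design'' means here: the distribution of $U(\theta)$ matches the Haar measure on the compact group $G$ (not on $\mathrm{U}(2^n)$) up to second moments. Second moments suffice, since $\operatorname{Var}_\theta[\ell]$ only involves $\mathbb{E}[U^{\otimes 2}(\cdot)U^{\dagger\otimes 2}]$. Assume without loss of generality that $O\in i\mfg$; the case $\rho\in i\mfg$ is symmetric, using $\ell=\tr(\rho\, U^\dagger O U)$. I will also use the reductive decomposition \eqref{eq:reductive}, $\mfg=\mathfrak c\oplus\bigoplus_j\mfg_j$ with each $\mfg_j$ simple, and write $O=O_{\mathfrak c}+\sum_j O_j$ with $iO_j\in\mfg_j$.

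Next I compute $\ell$ componentwise. Since $\mathrm{Ad}_U$ preserves each ideal and fixes the center, $\ell(\theta)=\tr(\rho O_{\mathfrak c})+\sum_j \tr\bigl(\rho\,\mathrm{Ad}_{U^\dagger}(O_j)\bigr)$. Because $\mathrm{Ad}_{U^\dagger}(O_j)$ lies in $i\mfg_j$, only the orthogonal projection $\rho_j$ of $\rho$ onto $(\mfg_j)_{\mathbb C}$ contributes, with respect to the Hilbert--Schmidt inner product. Thus
\[
\ell-\tr(\rho O_{\mathfrak c})=\sum_j \tr\bigl(\rho_j\,\mathrm{Ad}_{U^\dagger}(O_j)\bigr).
\]
The adjoint representation of $G$ on each simple $\mfg_j$ has no nonzero fixed vectors, since a fixed vector would be central. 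Hence $\mathbb{E}_U[\mathrm{Ad}_{U^\dagger}(O_j)]=0$, and the mean is $\tr(\rho O_{\mathfrak c})$.

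For the variance I expand the square and treat cross terms and diagonal terms separately. For a cross term with $j\neq j'$, $\mathbb{E}[\mathrm{Ad}_U\otimes\mathrm{Ad}_U]$ projects onto the $G$-invariants in $\mfg_j\otimes\mfg_{j'}$. An invariant there is a $G$-equivariant map $\mfg_j^*\to\mfg_{j'}$ between non-isomorphic irreducibles (they have different kernels in $[\mfg,\mfg]$), so it vanishes. For a diagonal term, the complexified adjoint representation on $(\mfg_j)_{\mathbb C}$ is irreducible, as $\mfg_j$ is simple and compact. By Schur's lemma, the invariant subspace of $\mfg_j\otimes\mfg_j$ is one-dimensional and spanned by the invariant form, which I can take to be the Hilbert--Schmidt form because it is $\mathrm{Ad}$-invariant. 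The standard twirl identity then gives
\[
\mathbb{E}_U\bigl[\tr(\rho_j\,\mathrm{Ad}_{U^\dagger}O_j)^2\bigr]=\frac{\tr(\rho_j^2)\,\tr(O_j^2)}{\dim\mfg_j}.
\]
The normalization is checked by taking orthonormal bases and tracing the projector. Summing over $j$ gives Eq.~\eqref{eq:VQA-exp-var}, with $\tr(\rho_j^2)=\mathcal P_{\mfg_j}(\rho)$ and $\tr(O_j^2)=\mathcal P_{\mfg_j}(O)$.

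I expect the main obstacle to be the technical justifications rather than the computation. Three points need care. First, the complexified projection of a Hermitian $\rho$ onto $(\mfg_j)_{\mathbb C}$ must be Hermitian, and the real-versus-complex Schur lemma must be applied correctly; compactness together with simplicity makes the complexified adjoint representation irreducible. Second, I must confirm that the center contributes no variance. Third, I must justify replacing the circuit distribution by the Haar measure on $G$, which amounts to the definition of a $2$-design relative to $G$.
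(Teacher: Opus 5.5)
The paper gives no proof of this statement; it imports it directly from \cite{ragone2023unified}. Your argument is correct and is essentially the standard proof given there. You decompose $\mathfrak g$ into its center, which $\mathrm{Ad}_U$ fixes, so it only shifts the mean, and its simple ideals. The cross terms between distinct ideals vanish because the corresponding adjoint representations are non-isomorphic. Schur's lemma on each irreducible $(\mathfrak g_j)_{\mathbb C}$ then produces the $1/\dim\mathfrak g_j$ normalization.

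Two steps you use implicitly deserve one line each:
\begin{itemize}
\item The identification $\operatorname{tr}(O_j^2)=\mathcal{P}_{\mathfrak g_j}(O)$ needs the decomposition $\mathfrak c\oplus\bigoplus_j\mathfrak g_j$ to be Hilbert--Schmidt orthogonal. This follows from $\mathfrak g_j=[\mathfrak g_j,\mathfrak g_j]$ and $\operatorname{tr}([A,B]C)=\operatorname{tr}(A[B,C])=0$ whenever $C$ lies in another ideal or in the center.
\item $e^{\mathfrak g}$ need not be closed in $\mathrm{U}(2^n)$ when $\mathfrak c\neq 0$, for example through an irrational winding generated by the center. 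Passing to its closure only adds central directions that act trivially on $\mathfrak g$ under $\mathrm{Ad}$, so the Haar computation is unchanged.
\end{itemize}
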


We wish to apply the above theorem to \[[\mathfrak{g}_{K_n},\mathfrak{g}_{K_n}]=[\mfg,\mfg]=\bigoplus_j\mathfrak{su}(W^{(j)}_\text{even})\oplus \mathfrak{su}(W^{(j)}_\text{odd}).\] Recall that for QAOA-MaxCut on $K_n$, the initial state $\rho=\ket{+}^{\otimes n}\bra{+}^{\otimes n}$, and the observable $O=\frac{1}{\sqrt{|E|}}\ZZ$, where $|E|=n(n-1)/2$ is the number of edges in $K_n$.

In what follows, denote by $\Pi^{(n/2)}_\text{even}$ the projector onto the $W^{(n/2)}_\text{even}$ subspace, and \[d=\dim\left(W^{(n/2)}_{\text{even}}\right) = \left\lfloor n/2\right\rfloor +1.\] Let $\mathfrak{s}=\mathfrak{su}\left(W^{(n/2)}_{\text{even}}\right)$, and note that
\eql{ \label{eq:dim-s}
\dim(\mathfrak{s}) &= \left(\lfloor n/2 \rfloor + 1\right)^2 - 1 = 
\begin{cases} 
\frac{n(n+4)}{4}, & \text{if } n \text{ is even}; \\
\frac{(n+3)(n-1)}{4}, & \text{if } n \text{ is odd}.
\end{cases}
}

Now, Eq.~\eqref{eq:maxjstate} says that $\ket{+}^{\otimes n}=\ket{n/2,n/2}_x\in W^{(n/2)}_\text{even}$ and thus $i\rho\in\mathfrak{u}\left(W^{(n/2)}_{\text{even}}\right)$. Since the initial state lies in the $W^{(n/2)}_\text{even}$ subspace, only the simple subalgebra $\mathfrak{s}=\mathfrak{su}\left(W^{(n/2)}_{\text{even}}\right)$ has non-zero purity $\+P_\mathfrak{s}(\rho)$.  We therefore need only consider the single term $\frac{\+P_\mathfrak{s}(\rho)\+P_\mathfrak{s}(O)}{\dim(\mathfrak{s})}$ in Theorem~\ref{thm:ragone}.   Theorem~\ref{thm:loss-var} then follows by combining Eq.~\eqref{eq:dim-s} with the following lemma.

\begin{lemma} \label{lem:rho-o-purity} The initial state $\rho$ and observable $O$ have $\mathfrak{s}$-purities as follows:
    \begin{enumerate}
        \item  $\mathcal{P}_\mathfrak{s}(\rho) = \begin{cases}
        \frac{n}{n+2},&\quad {\rm if~} n {\rm~is~even};\\ 
        \frac{n-1}{n+1},&\quad {\rm if~} n \text{\rm~is~odd}.
        \end{cases}$\label{line:point1}
        \item $\mathcal{P}_{\mathfrak{s}}(O) = \begin{cases}
\frac{(n + 1)(n + 2)(n+4)(4n - 3)}{180(n - 1)}, &\quad  {\rm if~} n {\rm~is~even}; \\
\frac{(n + 1)(n + 2)(n + 3)}{45},\ \ \ \ \  &\quad  {\rm if~} n {\rm~is~odd}.
\end{cases}$\label{line:point2}
\end{enumerate}
\end{lemma}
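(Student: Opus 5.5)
The plan is to compute both purities directly from the definition in Theorem~\ref{thm:ragone}, using the explicit description of $\mathfrak{s}=\mathfrak{su}(W^{(n/2)}_{\rm even})$. Its complexification is the space of traceless operators supported on $W^{(n/2)}_{\rm even}$. Hence, for any Hermitian $H$ (in the Schur--Weyl basis, with $\Pi=\Pi^{(n/2)}_{\rm even}$), the orthogonal projection is
\[
H_{\mathfrak{s}}=\Pi H\Pi-\frac{\operatorname{tr}(\Pi H\Pi)}{d}\Pi ,
\]
and therefore
\[
\mathcal{P}_{\mathfrak{s}}(H)=\operatorname{tr}\big((\Pi H\Pi)^2\big)-\frac{\operatorname{tr}(\Pi H\Pi)^2}{d}.
\]
I would first justify this formula. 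The multiplicity space $V_{n/2}$ is one-dimensional, since it corresponds to the symmetric subspace, so there is no multiplicity factor to track.

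For part~\ref{line:point1}, Eq.~\eqref{eq:maxjstate} gives $\rho=\ket{n/2,n/2}_x\bra{n/2,n/2}_x$ with $\ket{n/2,n/2}_x\in W^{(n/2)}_{\rm even}$. Hence $\Pi\rho\Pi=\rho$, $\operatorname{tr}\rho^2=1$ and $\operatorname{tr}\rho=1$, so $\mathcal{P}_{\mathfrak{s}}(\rho)=1-1/d$. Substituting $d=n/2+1$ for even $n$ and $d=(n+1)/2$ for odd $n$ gives $n/(n+2)$ and $(n-1)/(n+1)$, respectively.

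For part~\ref{line:point2}, I would write $O=\frac{1}{\sqrt{|E|}}\ZZ=\frac{1}{2\sqrt{|E|}}(\Z^2-nI)$. On the $j=n/2$ block, Eq.~\eqref{eq:XZ} gives $\Z^2=4(J_z^{(n/2)})^2$. The identity part drops out of $H_{\mathfrak{s}}$, so
\[
\mathcal{P}_{\mathfrak{s}}(O)=\frac{4}{|E|}\,\mathcal{P}_{\mathfrak{s}}\big((J_z^{(n/2)})^2\big).
\]
By Fact~\ref{fact:jx-jz1}, and since $W^{(n/2)}_{\rm even}$ is invariant, $\Pi(J_z^{(n/2)})^2\Pi$ is the restriction of $D^{(n/2)}+T^{(n/2)}$ to the indices $m\in\mathcal{M}_{n/2,{\rm even}}=\{n/2,n/2-2,\dots\}$. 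Its diagonal entries are $\tfrac12(j(j+1)-m^2)$ and its off-diagonal entries are $t^{(n/2)}_m$ linking $m$ and $m+2$, with $j=n/2$. Therefore
\[
\mathcal{P}_{\mathfrak{s}}\big((J_z^{(n/2)})^2\big)=\sum_{m}\tfrac14\big(j(j+1)-m^2\big)^2+2\sum_{m}\big(t^{(j)}_m\big)^2-\frac1d\Big(\sum_m \tfrac12\big(j(j+1)-m^2\big)\Big)^2,
\]
with all sums over $m\in\mathcal{M}_{j,{\rm even}}$ (and over $m,m+2\in\mathcal{M}_{j,{\rm even}}$ for the $t$-sum). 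By Eq.~\eqref{eq:tjm}, $(t^{(j)}_m)^2=\frac1{16}(j(j+1)-m(m+1))(j(j+1)-(m+1)(m+2))$ is a polynomial in $m$, so every sum is a power sum of an arithmetic progression with step $2$.

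I would split into cases. For $n$ even, $m$ runs over $n/2,n/2-2,\dots,-n/2$, which I reparametrize as $m=n/2-2k$ with $k=0,\dots,n/2$. For $n$ odd, $m=n/2-2k$ with $k=0,\dots,(n-1)/2$. In each case I evaluate the sums of $k^p$ for $p\le 4$ by Faulhaber's formulas and simplify. Finally I multiply by $4/|E|=8/(n(n-1))$ and check against the stated closed forms. It is convenient to cross-check small cases, e.g.\ $n=2,3,4$, by hand.

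The main obstacle is this last step: the bookkeeping of the quartic power sums and the parity cases, and making the factor $(n-1)$ cancel in the odd case but not in the even case. Everything else follows immediately from Fact~\ref{fact:jx-jz1} and the projection formula.
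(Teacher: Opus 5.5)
Your proposal is correct. Part 1 is exactly the paper's argument, giving $\mathcal{P}_{\mathfrak{s}}(\rho)=1-1/d$. Part 2 takes a genuinely different route.

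\textbf{The paper's route for part 2.} It diagonalizes $\Pi(J_z^{(n/2)})^2\Pi$ on $W^{(n/2)}_{\rm even}$. Appendix~B uses the parity operator $P^{(j)}=\Pi^{(j)}_{\rm even}-\Pi^{(j)}_{\rm odd}=e^{-i\pi j}e^{i\pi J_x^{(j)}}$, which anticommutes with $J_z^{(j)}$, to show that the spectrum there is $\{m^2 : m\in\mathcal{M}_j^+\}$. The purity then reduces to $\sum_{m\in\mathcal{M}^+}m^4-\frac{1}{d}\big(\sum_{m\in\mathcal{M}^+} m^2\big)^2$. These are sums over consecutive integers or half-integers, and they factor almost at once. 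For example, with $N=n/2$ in the even case one gets $N(N+1)(2N+1)(N+2)(8N-3)/180$.

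\textbf{Your route.} You use only basis-invariance of $\operatorname{tr}(A)$ and of $\operatorname{tr}(A^2)=\sum_{a,b}|A_{ab}|^2$, reading the entries directly off Fact~\ref{fact:jx-jz1}. This needs no spectral lemma at all, only facts already established for the DLA part. The cost is messier quartic sums over step-$2$ progressions, including the $(t^{(j)}_m)^2$ terms. The substitution $u=m+1$, $c=j(j+1)$ helps here, since it turns these into $(t^{(j)}_m)^2=\frac{1}{16}\big((c-u^2)^2-u^2\big)$.

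\textbf{Consistency.} Because both traces are basis-invariant, the two computations must agree. For instance, $\sum_{m\in\mathcal{M}_{j,\rm even}}\frac12(c-m^2)=\sum_{m\in\mathcal{M}_j^+}m^2$. Your entry-wise formula gives $\mathcal{P}_{\mathfrak{s}}(O)=2,\ 8/3,\ 52/9$ for $n=2,3,4$, which matches the stated closed forms.

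Your route also avoids the slip in the paper's final display: the eigenvalues of $O^{(n/2)}_{\rm even}$ should read $(2m^2-n/2)/\sqrt{|E|}$, not $(m^2-n/2)/\sqrt{|E|}$.
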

\begin{proof} 
For Point \ref{line:point1}, note that $\rho = \frac{\Pi^{(n/2)}_\text{even}}{d} + \rho'$ where $\rho' \in \mathfrak{su}\left( W^{(n/2)}_{\text{even}} \right)$ is traceless. Then
\eq{
\mathcal{P}_{\mathfrak{s}}(\rho) & = \operatorname{tr}\left( \left( \rho' \right)^{2} \right) = \operatorname{tr}\left( \left( \rho - \frac{\Pi^{(n/2)}_\text{even}}{d} \right)^{2} \right) \\
      & = \operatorname{tr}\left( \rho^{2} \right) - 2\operatorname{tr}\left(\rho\Pi^{(n/2)}_\text{even}/d \right) + \operatorname{tr}\left( {\Pi^{(n/2)}_\text{even}/d^{2}} \right) \\
      &=1 - \frac{1}{d}.
}
Substituting $d=\lfloor n/2 \rfloor +1$   gives the result.

    

For Point \ref{line:point2}, first note that 
\eq{
O &=\frac{1}{\sqrt{|E|}}\ZZ= \frac{1}{2\sqrt{|E|}}\left( \Z^{2} - nI \right) = \frac{1}{\sqrt{|E|}}\left( 2 \bigoplus_{j\in \mathcal{J}_n}\left(J_{z}^{(j)}\right)^2 \otimes I_{V_j} - \frac{n}{2}I \right).
}
Thus the projection to the $j = n/2$  block gives 
\eq{
O^{(n/2)} &= \frac{1}{\sqrt{|E|}}\left( 2\left(J^{(n/2)}_z\right)^2 \otimes I_{V_{n/2}} - \frac{n}{2}I_{W_{n/2}} \otimes I_{V_{n/2}} \right).
}
As $V_{n/2}$ is 1-dimensional, we drop the $I_{V_{n/2}}$ term and write \[O^{(n/2)} = \frac{1}{\sqrt{|E|}}\left( 2\left(J^{(n/2)}_z\right)^2  - \frac{n}{2}I_{W_{n/2}} \right).\] Now let us further project it onto the $W^{(n/2)}_\text{even}$ space to give 
\eq{
O^{(n/2)}_\text{even} = \Pi^{(n/2)}_\text{even} O^{(n/2)}\Pi^{(n/2)}_\text{even},
}
and take the traceless part to obtain 
\eq{
O_{\mathfrak{s}} &= O^{(n/2)}_\text{even} - \frac{\operatorname{tr}\left(O^{(n/2)}_\text{even} \right)}{d}I_{W^{(n/2)}_\text{even}}.
}
The purity is

\eq{
\mathcal{P}_{\mathfrak{s}}(O) &= \operatorname{tr}\left( \left(O^{(n/2)}_\text{even} - \frac{\operatorname{tr}\left(O^{(n/2)}_\text{even} \right)}{d}I_{W^{(n/2)}_\text{even}} \right)^2\right) \\
&= \operatorname{tr}\left( \left(O^{(n/2)}_\text{even}\right)^2 \right) - \frac{\operatorname{tr}\left(O^{(n/2)}_\text{even}\right)^2}{d}
}

Finally, note that the eigenvalues of $O^{(n/2)}_\text{even}$ are $\left( m^{2} - n/2 \right)/\sqrt{|E|}$, where $m\in\mathcal{J}_n$ (see Appendix~\ref{app:app2}).  Noting that $|E| = n(n-1)/2$ we obtain,
\eq{
\mathcal{P}_{\mathfrak{s}}(O) &= \frac{2}{n(n-1)}\left(\sum_{j\in \mathcal{J}_n} (2j^2-n/2)^2 - \frac{1}{\lfloor n/2\rfloor +1}\left(\sum_{j\in\mathcal{J}_n}(2j-n/2)\right)^2\right)\\
&= \begin{cases}
\frac{(n + 1)(n + 2)(n+4)(4n - 3)}{180(n - 1)}, &\quad  \text{if } n \text{ is even}; \\
\frac{(n + 1)(n + 2)(n + 3)}{45},\ \ \ \ \  &\quad    \text{if } n \text{ is odd}.
\end{cases}
}
\end{proof}

\bibliographystyle{alpha}
\bibliography{dla.bib}

\appendix

\section{Distinctness of eigenvalues in Lemma~\ref{lem:XYing}} \label{app:app1}
Here we prove the following result, used in the proof of Lemma~\ref{lem:XYing}.

\begin{lemma}
For $k=(j,m)$ and $k'=(j',m')$ and $\alpha,\beta\in\mathbb{R}$ such that $\{1,\alpha,\beta\}$ are linearly independent over $\mathbb{Q}$, $\lambda_k(\alpha,\beta)\neq 0$ and  $\sigma_k(\alpha,\beta) = \sigma_{k'}(\alpha,\beta)\Rightarrow k=k'$.
\end{lemma}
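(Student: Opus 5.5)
The plan is to read off the coefficients of $1$, $\alpha$ and $\beta$ in the explicit formula \eqref{eq:lambda-mj},
\[
\lambda_k(\alpha,\beta) = \bigl(-4j(j+1) + 12(m+1)^2\bigr) + 2(m+1)\,\alpha + 2\,\beta ,
\]
and then use the $\mathbb{Q}$-linear independence of $\{1,\alpha,\beta\}$. For every $(j,m)\in\mathcal{K}$, both $j$ and $m$ are integers or half-integers. Hence the three coefficients $c_0 = -4j(j+1)+12(m+1)^2$, $c_1 = 2(m+1)$ and $c_2 = 2$ are rational. This is the only input the independence hypothesis needs: a rational combination $c_0+c_1\alpha+c_2\beta$ vanishes only if $c_0=c_1=c_2=0$.

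\emph{Non-vanishing.} The coefficient of $\beta$ in $\lambda_k$ is $2\neq 0$, so $\lambda_k(\alpha,\beta)\neq 0$ for every $k\in\mathcal{K}$.

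\emph{Injectivity of $\sigma$.} Suppose $\sigma_k=\sigma_{k'}$, i.e.\ $\lambda_k^2=\lambda_{k'}^2$, so $\lambda_k=\pm\lambda_{k'}$.
\begin{itemize}
\item In the case $\lambda_k=-\lambda_{k'}$, the relation $\lambda_k+\lambda_{k'}=0$ is a rational combination of $1,\alpha,\beta$ whose $\beta$-coefficient is $4$. This is impossible.
\item In the case $\lambda_k=\lambda_{k'}$, comparing $\alpha$-coefficients gives $2(m+1)=2(m'+1)$, so $m=m'$. Comparing constant terms then gives $-4j(j+1)+12(m+1)^2 = -4j'(j'+1)+12(m+1)^2$, i.e.\ $j(j+1)=j'(j'+1)$. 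Since $j,j'\ge 0$ and $x\mapsto x(x+1)$ is strictly increasing on $[0,\infty)$, this forces $j=j'$. Hence $k=k'$.
\end{itemize}

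There is no real obstacle here. The only points needing care are confirming that all coefficients are rational, which holds because $j$ and $m$ are half-integers, and using $j\ge 0$ in the last step; both are immediate from the definitions of $\mathcal{J}_n$ and $\mathcal{M}_j$. Such $\alpha,\beta$ exist, for instance $\alpha=\sqrt2$ and $\beta=\sqrt3$.
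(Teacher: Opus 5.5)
Your proof is correct and follows the paper's argument essentially step for step. Non-vanishing comes from the nonzero $\beta$-coefficient. The case $\lambda_k=-\lambda_{k'}$ is excluded because the resulting relation has $\beta$-coefficient $4$, and the case $\lambda_k=\lambda_{k'}$ forces $m=m'$ and then $j=j'$ via $j,j'\ge 0$. Your explicit check that the coefficients are rational (since $j$ and $m$ are half-integers) spells out a point the paper's proof leaves implicit.
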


\begin{proof}
Recall that $\sigma_k(\alpha,\beta) = -\left(\lambda^{(j)}_m(\alpha,\beta)\right)^2$, where
\eql{
\lambda^{(j)}_m(\alpha,\beta)=-4j(j+1) + 12(m+1)^2 + 2\beta +2(m+1)\alpha. \label{eq:lambda_ab}
}

Since $\{1,\alpha,\beta\}$ are linearly independent over $\mathbb{Q}$ and the coefficient of $\beta$ is not zero in Eq.~\eqref{eq:lambda_ab}, it follows that  $\lambda_m^{(j)}(\alpha,\beta)\neq 0$.  Now, if $\sigma_k(\alpha,\beta)=\sigma_{k'}(\alpha,\beta)$ then $\lambda_m^{(j)}(\alpha,\beta) = \pm \lambda_{m'}^{(j')}(\alpha,\beta)$.

\paragraph{Case 1: $\lambda^{(j)}_m(\alpha,\beta) = \lambda^{(j')}_{m'}(\alpha,\beta)$.} In this case we must have $m=m'$ for the two $\alpha$ terms to be equal.  But then equality of the rational terms requires that $j(j+1) = j'(j'+1)$. And since $j,j'\ge 0$, this implies that $j=j'$.

\paragraph{Case 2: $\lambda^{(j)}_m(\alpha,\beta) = -\lambda^{(j')}_{m'}(\alpha,\beta)$.} In this case we have
\eq{
-4j(j+1) - 4j'(j'+1) + 12 (m+1)^2 + 12(m'+1)^2 + 4\beta + 2(m+m'+2)\alpha &=0,
}
but this has no solutions since the coefficient of $\beta$ is not zero.
\end{proof}

\section{Eigenvalues of $\left( J_z^{(j)}\right)^2$ on $W^{(j)}_{\rm even}$}\label{app:app2}
Here we prove the following result which is used in the proof of Lemma~\ref{lem:rho-o-purity}.
\begin{lemma} The eigenvalues of $\left( J_z^{(j)}\right)^2$ restricted to the $W^{(j)}_{\rm even}$ subspace are $m^2$, for all $m\in\mathcal{M}_j^+$, each with multiplicity 1, where
\eq{
\mathcal{M}_j^+ &= \begin{cases}
     \{ 0,1,\ldots, j\}, &\quad \text{\rm if }j\text{\rm~is an integer};\\
     \{1/2, 3/2, \ldots, j\}, &\quad \text{\rm if }j\text{\rm~is a half-integer}.
\end{cases}
}
\end{lemma}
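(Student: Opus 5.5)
Since $W^{(j)}_{\rm even}$ is invariant under $\left(J_z^{(j)}\right)^2$, I will diagonalize $\left(J_z^{(j)}\right)^2$ on all of $W^{(j)}$ in a convenient basis and then see which eigenvectors lie in $W^{(j)}_{\rm even}$.

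On $W^{(j)}$, $J_z^{(j)}$ is a spin-$j$ angular momentum component, so it has eigenvectors $\ket{j,\mu}_z$ with eigenvalues $\mu\in\mathcal{M}_j$, each of multiplicity one. Hence $\left(J_z^{(j)}\right)^2$ has eigenvalue $\mu^2$ on $\mathrm{span}\{\ket{j,\mu}_z,\ket{j,-\mu}_z\}$. This space is two-dimensional for $\mu\neq 0$ and one-dimensional for $\mu=0$, which occurs only when $j$ is an integer.

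The key tool is the parity operator $R=e^{i\pi(J_x^{(j)}-j)}$. In the $x$-basis it acts as $R\ket{j,m}_x=(-1)^{j-m}\ket{j,m}_x$, so $W^{(j)}_{\rm even}$ and $W^{(j)}_{\rm odd}$ are exactly its $+1$ and $-1$ eigenspaces. Since $R$ is a $\pi$-rotation about the $x$-axis, it anticommutes with $J_z^{(j)}$. This can be checked directly from Eq.~\eqref{eq:jz}: $J_z^{(j)}$ only connects $m$ to $m\pm1$, which flips the parity $(-1)^{j-m}$. Therefore $R$ commutes with $\left(J_z^{(j)}\right)^2$ and maps $\ket{j,\mu}_z$ to a unit multiple of $\ket{j,-\mu}_z$, say $R\ket{j,\mu}_z=c_\mu\ket{j,-\mu}_z$.

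Using $R^2=I$, on each two-dimensional space $\mathrm{span}\{\ket{j,\pm\mu}_z\}$ with $\mu>0$, $R$ is an off-diagonal involution. Such an involution has eigenvalues $+1$ and $-1$, each once. So every $\mu^2$ with $\mu>0$, $\mu\in\mathcal{M}_j$, appears exactly once in $W^{(j)}_{\rm even}$ and exactly once in $W^{(j)}_{\rm odd}$.

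For integer $j$, the remaining vector $\ket{j,0}_z$ is an $R$-eigenvector, and I must determine its sign. I will fix the sign by a dimension count. By Eq.~\eqref{eq:dimWevenodd}, $\dim W^{(j)}_{\rm even}=\lfloor j\rfloor+1$, while the pairs with $\mu>0$ contribute only $\lceil j\rceil$ dimensions. For integer $j$ this leaves one extra dimension, so $\ket{j,0}_z\in W^{(j)}_{\rm even}$ with eigenvalue $0$. For half-integer $j$ there is no $\mu=0$ and the count already matches.

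Together, the eigenvalues of $\left(J_z^{(j)}\right)^2$ on $W^{(j)}_{\rm even}$ are exactly $\mu^2$ for $\mu\in\mathcal{M}_j^+$, each with multiplicity one. The main obstacle is establishing the anticommutation $RJ_z^{(j)}R=-J_z^{(j)}$ cleanly from the paper's explicit formulas, but this reduces to the one-line parity observation above.
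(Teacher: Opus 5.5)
Your proof is correct and follows essentially the same route as the paper, which uses the same parity operator $P^{(j)}=e^{-i\pi j}e^{i\pi J_x^{(j)}}$ anticommuting with $J_z^{(j)}$ and pairs $\ket{j,m}_z$ with $P^{(j)}\ket{j,m}_z$ to build the $+1$ eigenvectors $\ket{j,m}_z+P^{(j)}\ket{j,m}_z$ for $m\in\mathcal{M}_j^+$. Your dimension count for the $\mu=0$ vector is a useful addition: the paper tacitly assumes $\ket{j,0}_z+P^{(j)}\ket{j,0}_z\neq 0$ (that is, $P^{(j)}\ket{j,0}_z=+\ket{j,0}_z$) without checking it.
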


\begin{proof} 

Let $\{\ket{j,m}_z\}_{m\in\mathcal{M}_j}$ denote a basis in which $J^{(j)}_z$ is diagonal, i.e., $J_z^{(j)}\ket{j,m}_z = m\ket{j,m}_z$, let $\Pi^{(j)}_\text{even}$ and $\Pi^{(j)}_\text{odd}$ be the projectors onto $W^{(j)}_\text{even}$ and $W^{(j)}_\text{odd}$, respectively, and let $P^{(j)} = \Pi^{(j)}_\text{even}-\Pi^{(j)}_\text{odd}$. Then $P^{(j)} = e^{-i\pi j} e^{i \pi J^{(j)}_x}$ and, from the fact that $e^{i \pi J_x^{(j)}}J_z^{(j)}e^{-i\pi J_x^{(j)}}= - J_z^{(j)}$, it follows that $P^{(j)} J_z^{(j)} = -J^{(j)}_z P^{(j)}$. Thus,
\eq{
J_z^{(j)}P^{(j)} \ket{j, m}_z &= -P^{(j)}J_z^{(j)} \ket{j,m}_z= - m P^{(j)} \ket{j,m}_z,
}
i.e., $P^{(j)}\ket{j,m}_z$ is an eigenvector of $J_z^{(j)}$ with eigenvalue $-m$.  Defining $\ket{\psi^{(j)}_m}= \ket{j,m}_z+P^{(j)}\ket{j,m}_z$ we have
\eq{
P^{(j)}\ket{\psi^{(j)}_m}  &= \ket{\psi^{(j)}_m}
}
and
\eq{
\left(J^{(j)}_z\right)^2\ket{\psi^{(j)}_m} &= m^2\ket{\psi^{(j)}_m}.
}
Noting that $\ket{\psi^{(j)}_m}$ and $\ket{\psi^{(j)}_{m'}}$ are orthogonal when $m\neq m'$, the set $\left\{\ket{\psi^{(j)}_m}\right\}_{m\in \mathcal{M}_j^+}$ therefore constitutes an orthogonal basis for $W^{(j)}_\text{even}$ (which has dimension $\left\lfloor j\right\rfloor +1$). With respect to this basis, the eigenvalues of $\left(J_z^{(j)}\right)^2$ are $m^2$ for all $m\in \mathcal{M}_j^+$.

\end{proof}


\end{document}